\journal{a journal for publication}
\theoremstyle{plain}
\newtheorem{theorem}{Theorem}
\newtheorem{Ass}{Assumption}
\newtheorem{proposition}{Proposition}
\newtheorem{lemma}{Lemma}
\newtheorem{corollary}{Corollary}
\theoremstyle{definition}
\newtheorem{Def}{Definition}
\newtheorem{Rem}{Remark}
\begin{document}

\begin{frontmatter}

\title{Some results on maximum likelihood from incomplete data: finite sample properties and improved M-estimator for resampling}

\author[1]{Budhi Surya}

\address[1]{School of Mathematics and Statistics, Victoria University of Wellington,
Gate 6, Kelburn PDE, Wellington 6140, New Zealand}

\cortext[mycorrespondingauthor]{Corresponding author. Email address: \url{budhi.surya@vuw.ac.nz}}

\begin{abstract}
This paper presents some results on the maximum likelihood (ML) estimation from incomplete data. Finite sample properties of conditional observed information matrices are established. They possess positive definiteness and the same Loewner partial ordering as the expected information matrices do. An explicit form of the observed Fisher information (OFI) is derived for the calculation of standard errors of the ML estimates. It simplifies Louis (1982) general formula for the OFI matrix. To prevent from getting an incorrect inverse of the OFI matrix, which may be attributed by the lack of sparsity and large size of the matrix, a monotone convergent recursive equation for the inverse matrix is developed which in turn generalizes the algorithm of Hero and Fessler (1994) for the Cram\'er-Rao lower bound. To improve the estimation, in particular when applying repeated sampling to incomplete data, a robust M-estimator is introduced. A closed form sandwich estimator of covariance matrix is proposed to provide the standard errors of the M-estimator. By the resulting loss of information presented in finite-sample incomplete data, the sandwich estimator produces smaller standard errors for the M-estimator than the ML estimates. In the case of complete information or absence of re-sampling, the M-estimator coincides with the ML estimates. Application to parameter estimation of a regime switching conditional Markov jump process is discussed to verify the results. The simulation study confirms the accuracy and asymptotic properties of the M-estimator.\end{abstract}

\begin{keyword} 
Conditional observed information \sep
Incomplete data \sep
Maximum likelihood \sep
M-estimator \sep
Sandwich estimator \sep
\MSC[2020] 60J20 \sep
60J27\sep 62M09 \sep 62H30
\end{keyword}

\end{frontmatter}

\section{Introduction}

Large sample properties of maximum likelihood estimates (MLE) of statistical model parameters have been well documented in literature. See for e.g. Cram\'er (1946), Newey and McFadden (1994), and Van der Vaart (2000). The properties state that as the sample size increases, the MLE has asymptotic (multivariate) normal distribution with mean equal to the true parameter value whereas the covariance matrix is given by the inverse expected Fisher information of observed data. These fine properties of MLE were first shown by Fisher (1925) and later established rigorously, among others, by Cram\'er (1946). For unbiased estimators, the information matrix corresponds to the Cram\'er-Rao lower bound. See for e.g. p.489 of Cram\'er (1946) or p.2162 of Newey and McFadden (1994). For finite sample application, the observed Fisher information has been widely used to find the standard errors of the MLE with good accuracy (Efron and Hinkley, 1978). Although the large sample properties have been well developed for complete data, some further studies are required to understand the finite/large sample properties of the MLE when applied to an incomplete data. 
 
The EM algorithm developed by Dempster et al. (1977) for maximum likelihood estimation under incomplete data has widely been used over various fields in literature. It is a robust and powerful tool for statistical analysis with missing data (Little and Rubin, 2020). The algorithm provides an iterative approach to obtain the maximum likelihood estimates of the model parameters in a way that avoids necessary regularity conditions on the log-likelihood function in terms of the existence of its second derivative and invertibility of the corresponding Hessian matrix. Otherwise, if the regularity conditions are satisfied for each observation, one may employ the Newton-Raphson approach, i.e., the Fisher scoring method to find the MLE also iteratively, see for e.g. Osborne (1992), Hastie et al. (2009), and Takai (2020). There are two steps in the EM estimation. The first step, the \textit{E-step}, involves valuation under current parameter estimate of conditional expectation of the log-likelihood given the observed data, whereas the second step, the \textit{M-step}, deals with optimizing the conditional expectation. In each iteration, the algorithm increases the value of log-likelihood. The appealing monotone convergence property of the observed-data log-likelihood gives a higher degree of stability for the convergence of the EM algorithm (Wu, 1983). We refer to McLachlan and Krishnan (2008) for recent developments, extensions and applications of the EM algorithm. 

However, the EM algorithm only provides point estimates of parameters. Unlike the Fisher scoring method, it does not automatically produce the covariance matrix of the MLE. Additional steps are required to find the covariance matrix. For finite and independent data, the expected information is replaced by the observed information matrix specified by the second derivative of the observed data log-likelihood function (Efron and Hinkley, 1978). 

In general, the second derivative is very difficult to evaluate directly. One of major contribution on the evaluation of the observed information was given by Louis (1982) in which a general matrix formula was proposed. Notice that the Louis' formula involves conditional expectation of outer product of the complete-data score function which in general may be complicated to simplify. Meng and Rubin (1991) derived the covariance matrix using the fundamental identity given in Dempster et al. (1977) relating the log-likelihood of observed data, the EM-update criterion function and the conditional expectation of conditional log-likelihood of complete data given its incomplete observation. In working out the information matrix, they applied in the M-step of the EM-algorithm a first-order Taylor approximation around current parameter estimate to the EM-update function. This approximation was first noted in Meilijson (1989) in an attempt to provide a fast improvement to the EM algorithm. The covariance matrix derived in Meng and Rubin (1991) is given by the difference between the complete information matrix and an incomplete one, although in a slightly more complex form than Louis' formula. Based on the Taylor approximation discussed in Meilijson (1989), Jamshidian and Jennrich (2000) proposed a numerical differentiation method to evaluate the covariance matrix for the incomplete data. A rather direct calculation of observed information matrix was discussed in Oakes (1999) based on taking derivatives of the fundamental EM identity. 

All above methods provide convergent and consistent MLE of the true parameter whose consistent limiting normal distribution has the mean equal to the true value, whereas the asymptotic covariance matrix is specified by the inverse expected Fisher information of the observed data. As a result, the covariance matrix estimator is determined by the inverse of observed Fisher information.

However, inverting the observed Fisher information matrix may be problematic due to lack of sparsity and the (large) size of the information matrix. This might result in getting an incorrect inverse of the matrix. To overcome this problem, a monotone convergent iterative scheme is proposed to calculate the inverse. Furthermore, to improve the parameter estimation, in particular when applying repeated sampling such as the Bootstrapping method, see Efron and Hastie (2021), this paper proposes a robust M-estimator with smaller standard errors than that of the corresponding MLE. A closed form sandwich estimator of finite-sample covariance matrix is introduced to provide the standard errors of the M-estimator. The sandwich estimator reserves itself as the lower bound for the inverse of the observed Fisher information. It is slightly different from Huber sandwich estimator (Huber, 1967; Freedman, 2006; Little and Rubin, 2020) for model misspecification under incomplete data. In the absence of repeated sampling, the M-estimator coincides with the MLE.

This paper is organized as follows. Section \ref{sec:sec1} discusses maximum likelihood estimation from incomplete data and respective problems. Section \ref{sec:main} presents the main results and contributions of this paper. An example on conditional Markov jump processes is presented in Section \ref{sec:sec4}. A series of simulation studies based on the example of Section \ref{sec:sec4} are performed in Section \ref{sec:sec5} to verify the main results of Section \ref{sec:main}. Section \ref{sec:sec6} concludes this paper. 

\section{Maximum likelihood estimation from incomplete data}\label{sec:sec1}

\subsection{The likelihood function of incomplete data}
Let $X$ and $Y$ be two random vectors defined on the same probability space $(\Omega,\mathscr{F},\mathbb{P})$. Denote by $(\mathcal{X},\mathcal{S})$ and $(\mathcal{Y},\mathcal{T})$ the corresponding measurable state spaces of $X$ and $Y$ and by $T:\mathcal{X}\rightarrow \mathcal{Y}$ a many-to-one mapping from $\mathcal{X}$ to $\mathcal{Y}$. Suppose that a \textit{complete-data} vector $x\in\mathcal{X}$ is partially observed through an \textit{incomplete-data} vector $y=T(x)$ in $\mathcal{Y}$. Assume that there exist probability density functions $f_c(x\vert\theta)$ and $f_o(y\vert\theta)$ corresponding to the complete data $x\in\mathcal{X}$ and its incomplete observation $y\in\mathcal{Y}$, respectively. Here $\theta$ represents a vector of parameters on a parameter space $\Theta$, with $d=\vert \theta\vert$, characterizing the distribution of $X$. For formality, we assume that $\Theta$ is a compact set. Define $\mathcal{X}(y)=\{x\in\mathcal{X}: T(x)=y\}\in \mathcal{S}$. Then,
\begin{align}\label{eq:cond}
f_o(y\vert \theta)=\int_{\mathcal{X}(y)} f_c(x\vert\theta)\lambda(dx),
\end{align}
where $\lambda$ is a $\sigma-$finite measure on $\mathcal{S}$, absolutely continuous with respect to the probability distribution $\mathbb{P}\circ X^{-1}$ with the density function $f_c(x\vert \theta)$ (the Radon-Nikodym derivative). See e.g. Halmos and Savage (1949) for details. Notice that our description of the marginal distribution (\ref{eq:cond}) for incomplete data is slightly more general than the one employed in McLachlan and Krishnan (2008).

Following the identity (\ref{eq:cond}), the conditional probability density function $f(x\vert y,\theta)$ of the complete-data $X$ given its incomplete observation $Y$ is therefore given by
\begin{align}\label{eq:condpdf}
f(x\vert y,\theta)=\frac{f_c(x\vert\theta)}{f_o(y\vert \theta)}.
\end{align}

\begin{Ass}\label{ass:A12}
Without loss of generality, we assume throughout the remaining of this paper that the log-likelihood function $\log f_c(x\vert\theta)$ is twice continuously differentiable w.r.t $\theta$ and for all $\theta\in\Theta$, $m\in\{0,1,2\},$
\begin{align}\label{eq:bounded}
\int_{\mathcal{X}(y)} \Big\vert \frac{\partial^m \log f_c(x\vert\theta) }{\partial \theta^m}\Big\vert f(x\vert y,\theta)\lambda(dx)<\infty. \tag{A1}
\end{align}
\end{Ass}
\noindent Condition (\ref{eq:bounded}) verifies the existence of expectation $\mathbb{E}\big[\big\vert \frac{\partial^m \log f_c(X\vert\theta) }{\partial \theta^m}  \big\vert \big\vert Y=y,\theta^0\big]$  for all $\theta\in\Theta$, $m\in\{0,1,2\}$. 

Suppose that a complete data $X_1,\ldots, X_n$ were generated independently from the density function $f_c(x\vert\theta)$ under a pre-specified parameter value $\theta^0\in\Theta$. We assume throughout that each outcome $X^k$, $k\in\{1,\ldots, n\},$ is partially observed, represented by an incomplete-data vector $Y^k=T(X^k)$. And each observation $Y^k$ of $X^k$ is also independent. Namely, if $X^k$ is independent of $X^{\ell}$, for $k\neq \ell$, then $Y^k$ is independent of $Y^{\ell}$. The log-likelihood contribution of the incomplete observation $Y=\bigcup_{k=1}^n Y^k$,
\begin{align*}
\log f_o(Y\vert \theta)=\sum_{k=1}^n \log \int_{\mathcal{X}(y^k)} f_c(x^k\vert\theta)\lambda(dx^k),
\end{align*}
is used to get an estimator $\widehat{\theta}_n$ of $\theta^0$ defined as the global maximizer $\textrm{Argmax}_{\theta\in\Theta^d} \log f_o(Y\vert \theta)$ of the log-likelihood function. This method proves the convergence of $\widehat{\theta}_n$ to the true value $\theta^0$ as the sample size $n$ increases, where the convergence occurs almost surely under $\mathbb{P}_{\vert\theta^0}:=\mathbb{P}\{\bullet\vert\theta^0\}$. 

\subsection{M-criterion for maximum likelihood estimation}\label{sec:sec22}
The following proposition establishes the above claim.
\begin{proposition}\label{prop:prop1}
By independence of the observations $\{Y^k\}$, $\widehat{\theta}_n\overset{\mathbb{P}_{\vert \theta^0}%
}{\Longrightarrow}\theta^0$ as the sample size $n$ increases.
\end{proposition}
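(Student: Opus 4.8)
The plan is to view $\widehat{\theta}_n$ as the maximizer of the scaled objective $M_n(\theta):=\frac{1}{n}\sum_{k=1}^n \log f_o(Y^k\vert\theta)=\frac{1}{n}\log f_o(Y\vert\theta)$ and to invoke the classical argmax--continuity (Wald-type) machinery for M-estimators; see Newey and McFadden (1994) or Van der Vaart (2000). Since the $Y^k$ are independent and identically distributed under $\mathbb{P}_{\vert\theta^0}$ and, by Assumption \ref{ass:A12} with $m=0$, the map $\theta\mapsto\log f_o(Y\vert\theta)$ is integrable, the strong law of large numbers gives, for each fixed $\theta\in\Theta$,
\begin{align*}
M_n(\theta)\longrightarrow M(\theta):=\mathbb{E}\big[\log f_o(Y\vert\theta)\,\big\vert\,\theta^0\big]\qquad \mathbb{P}_{\vert\theta^0}\text{-a.s.}
\end{align*}

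First I would show that the limiting criterion $M$ is uniquely maximized at the true value $\theta^0$. Writing
\begin{align*}
M(\theta^0)-M(\theta)=\mathbb{E}\Big[\log\tfrac{f_o(Y\vert\theta^0)}{f_o(Y\vert\theta)}\,\Big\vert\,\theta^0\Big]
\end{align*}
identifies the difference as a Kullback--Leibler divergence, which is nonnegative by Jensen's inequality applied to the strictly concave logarithm and vanishes only when $f_o(\cdot\vert\theta)=f_o(\cdot\vert\theta^0)$ $\lambda$-a.e.; under identifiability of the incomplete-data family $\{f_o(\cdot\vert\theta):\theta\in\Theta\}$ this forces $\theta=\theta^0$, so $\theta^0$ is the unique maximizer of the continuous function $M$.

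Next I would upgrade the pointwise convergence above to uniform convergence on the compact set $\Theta$. Continuity of $\theta\mapsto\log f_o(y\vert\theta)$, inherited from the twice continuous differentiability assumed in Assumption \ref{ass:A12}, together with the integrable envelope provided by the $m=0,1$ bounds in (\ref{eq:bounded}), yields a uniform strong law of large numbers, namely $\sup_{\theta\in\Theta}\vert M_n(\theta)-M(\theta)\vert\to 0$ almost surely. Compactness of $\Theta$ is what allows the oscillation of the summands to be dominated uniformly after covering $\Theta$ by finitely many small balls. With $M$ continuous and uniquely maximized at $\theta^0$, $M_n\to M$ uniformly a.s., and $\Theta$ compact, the standard argmax theorem then forces $\widehat{\theta}_n\to\theta^0$ $\mathbb{P}_{\vert\theta^0}$-a.s., which is the stated convergence.

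The main obstacle I anticipate is twofold. The analytic difficulty lies in establishing the \emph{uniform} law of large numbers---promoting the one-point SLLN to control over all of $\Theta$ simultaneously---for which the envelope and continuity consequences of Assumption \ref{ass:A12} and the compactness of $\Theta$ are exactly the ingredients one needs. The more delicate modelling point is the \emph{identifiability} of the marginal family $f_o(\cdot\vert\theta)$, which is indispensable for uniqueness of the maximizer yet is not spelled out among the stated hypotheses; I would either add it as an explicit assumption or verify it directly for the incomplete-data model under study.
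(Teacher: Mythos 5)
Your proof follows essentially the same route as the paper's: the same M-criterion $M_n(\theta)$, the law of large numbers giving pointwise convergence to the limit criterion $M(\theta)$, the Kullback--Leibler/information inequality to identify $\theta^0$ as the unique maximizer (the paper invokes this as the Shannon--Kolmogorov inequality, citing Ferguson (1996)), and compactness of $\Theta$ to conclude consistency of the argmax. The two points you flag as delicate---the uniform law of large numbers and explicit identifiability of the family $\{f_o(\cdot\vert\theta)\}$---are exactly the steps the paper glosses over (its strict inequality $R(\theta^0,\theta)>0$ silently presumes identifiability, and it passes from pointwise a.s. convergence to argmax consistency without the uniformity argument), so your version is a more careful rendering of the same argument rather than a different one.
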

\begin{proof}
To prove the claim, consider the \textbf{M-criterion} 
\begin{align}\label{eq:criteriamle}
M_n(\theta)=\frac{1}{n}\sum_{k=1}^n \log f_o(Y^k\vert\theta).
\end{align}
See Van der Vaart (2000). By independence of $\{Y^k\}$, $M_n(\theta)\overset{\mathbb{P}_{\vert \theta^0}}{\Longrightarrow}M(\theta)=\mathbb{E}\big[\log f_o(Y^k\vert\theta)\vert \theta_0\big]$ which holds for a generic sample $Y^k$. Furthermore, from the Shanon-Kolmogorov information inequality, it holds for any $\theta \neq \theta ^{0}$ and generic $Y^{k}$, $R(\theta ^{0},\theta ):=\mathbb{E}\big[\log \big(\frac{f_o(Y^k\vert\theta^{0})}{f_o(Y^{k}\vert \theta )}\big)\big\vert \theta^0\big]>0,$ see p. 113 in Ferguson (1996). Thus, $\sup_{\theta \in
\Theta\backslash \theta^0} M(\theta) < M(\theta^0) \iff\theta^0=\text{arg}%
\max_{\theta\in\Theta} M(\theta).$ Since $\widehat{\theta}_n$ is the global
maximizer of $M_n(\theta)$ and the latter converges with probability one to $%
M(\theta)$, it follows that $\widehat{\theta}_n$ gets closer and closer to the
global maximizer $\theta^0$ of $M(\theta)$ as $n$ increases,
which by compactness of $\Theta$ implies that $\widehat{\theta}_n \overset{\mathbb{P}_{\vert \theta^0}}{\Longrightarrow} \theta^0$.
\end{proof}

By (\ref{eq:bounded}) and the first order Euler condition, the MLE $\widehat{\theta}_n$ is found as the solution of the systems of equation 
\begin{align}\label{eq:score}
0=S_n(\theta):=\frac{1}{n}\sum_{k=1}^n \frac{\partial \log f_o(Y^k\vert \theta)}{\partial \theta}.
\end{align}
Assuming continuous differentiability of the score function $\frac{\partial \log f_o(Y^k\vert \theta)}{\partial \theta}$, for large $n$, the consistency of $\widehat{\theta}_n$ allows one to apply the first-order Taylor approximation around $\theta^0$ to $S_n(\widehat{\theta}_n)$, see e.g. Freedman (2006), to arrive at
\begin{align}\label{eq:pers1}
\widehat{\theta}_n=&\theta^0 + J_y^{-1}(\theta^0)S_n(\theta^0),
\end{align}%
where $J_y^{-1}(\theta)$ is the observed Fisher information given by
\begin{align}\label{eq:OFIM}
J_y(\theta)=-\frac{1}{n}\sum_{k=1}^n \frac{\partial^2 f_o(Y^k\vert\theta)}{\partial \theta^2}.
\end{align}

By independence of $\{Y^k\}$, application of the Slutsky's lemma (see, Lemma 2.8 in Van der Vaart (2000)) and the Central Limit Theorem (CLT), it is known that $\sqrt{n}S_n(\theta^0)\sim N(0, I_y(\theta^0))$, $I_y(\theta)=\mathbb{E}\big[-\frac{\partial^2 \log f_0(Y\vert\theta)}{\partial \theta^2}\big\vert\theta\big]$ being the expected Fisher information matrix. See, e.g. Cram\'er (1946), Newey and McFadden (1994) and Van der Vaart (2000). As a result, it follows from (\ref{eq:pers1}) that the MLE $\widehat{\theta}_n$ has the $\sqrt{n}-$asymptotic normal distribution $$\sqrt{n}(\widehat{\theta}_n-\theta^0)\sim N(0, I_y^{-1}(\theta^0)).$$

\subsection{Some estimation difficulties for incomplete data}

However, for incomplete data, it is generally difficult to obtain the MLE $\widehat{\theta}_n$ explicitly. There are some difficulties:
\begin{enumerate}
\item[(i)] $\widehat{\theta}_n$ solves the systems of (nonlinear) equations $$0=S_n(\widehat{\theta}_n)=\sum_{k=1}^n \frac{\int_{\mathcal{X}(y^k)}\frac{\partial f_c(x^k\vert \widehat{\theta}_n)}{\partial \theta}\lambda(dx^k)}{\int_{\mathcal{X}(y^k)} f_c(x^k\vert\widehat{\theta}_n)\lambda(dx^k)},$$ from which it is difficult to pull out $\widehat{\theta}_n$ explicitly.

\item[(ii)] Since $\theta^0$ is unknown, the estimator (\ref{eq:pers1}) is not applicable. Hence, $\widehat{\theta}_n$ should be derived recursively. 

\item[(iii)] The observed Fisher information $J_y(\theta)$ and the score function $S_n(\theta)$ are not explicit. Therefore, the recursive Fisher scoring method (see, Hastie et al. (2009)) derived from (\ref{eq:pers1}) is difficult to implement.  

\item[(iv)] Although Louis (1982) formula is available for evaluating the observed Fisher information $J_y(\theta)$, it does not simplify the conditional expectation of outer product of the complete-data score function.

\item[(v)] In general the observed information matrix $J_y(\theta)$ lacks of sparsity and the size of $\theta$ could be large. Hence, inverting $J_y(\theta)$ directly might be difficult and may result in an incorrect inverse $J_y^{-1}(\theta)$.

\item[(vi)] Due to the resulting loss of information presented in incomplete data $Y$, the standard errors of the MLE $\widehat{\theta}_n$ is larger than that of using complete data $X$. 

\item[(vii)] The EM algorithm may be slow in its convergence and might be difficult to get an explicit iteration form due to possible nonlinearity of the likelihood function in terms of the model parameter.
\end{enumerate}

This paper attempts to solve the above problems and proposes improved estimation of the true value $\theta^0$ in terms of smaller standard error of its estimator, in particular, by applying repeated sampling method. 

\section{Main contributions}\label{sec:main}

To overcome the difficulties (i)-(vi), we consider estimating the true value $\theta^0$ based on the incomplete observation $\{Y^k, k=1,\ldots,n\}$ using the \textbf{M-criterion}
\begin{eqnarray}\label{eq:EM}
\mathscr{M}_n(\theta)=\frac{1}{n}\sum_{k=1}^n \mathbb{E}\Big[\log f_c(X^k\vert\theta)\Big\vert Y^k,\theta^0\Big], 
\end{eqnarray}
where $\mathbb{E}[\bullet \vert\theta^0]$ refers to the expectation operator associated with the underlying probability measure $\mathbb{P}\{\bullet \vert\theta^0\}$ from which the complete-data $\{X^k: 1\leq k\leq n\}$ were generated under $\theta^0$. Note that such M-criterion function as $\mathscr{M}_n(\theta)$ (\ref{eq:EM}) is not discussed in Van der Vaart (2000). 

An estimator $\widehat{\theta}_n^0$ of $\theta^0$ is defined as the maximizer of the M-criterion $\mathscr{M}_n(\theta)$ \eqref{eq:EM} over $\theta\in\Theta$, as the solution of
\begin{align}\label{eq:solsol}
0=\mathscr{S}_n(\theta):=\frac{1}{n}\sum_{k=1}^n \mathbb{E}\Big[\frac{\partial \log f_c(X^k\vert\theta)}{\partial \theta} \Big\vert Y^k,\theta^0\Big]. 
\end{align}
Define $\mathscr{S}(\theta):=\mathbb{E}\big[\frac{\partial \log f_c(X\vert \theta)}{\partial \theta}\big\vert \theta^0\big]$. It is clear that $\mathscr{S}(\theta^0)=0$. Thus, the M-estimator $\widehat{\theta}_n^0$ can simply be written as 
\begin{align}\label{eq:EMLE}
\widehat{\theta}_n^0=\mathscr{S}_n^{-1}\big(\mathscr{S}(\theta^0)\big).
\end{align}
As the M-estimator, we may apply similar arguments to the proof of Proposition \ref{prop:prop1} and the following result to establish the consistency of the M-estimator $\widehat{\theta}_n^0$ (\ref{eq:EMLE}). 
\begin{lemma}\label{lem:lem1}
Let $\mathscr{M}(\theta)=\mathbb{E}\big[\log f_c(X\vert\theta)\big\vert \theta^0\big]$. Then, for $\theta^0\in\Theta$, $\sup_{\theta \in
\Theta\backslash \theta^0}\mathscr{M}(\theta)\leq \mathscr{M}(\theta^0)$ with $\mathscr{M}^{\prime}(\theta^0)=0$. 
\end{lemma}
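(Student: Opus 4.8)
The plan is to read $\mathscr{M}(\theta^0)-\mathscr{M}(\theta)$ as a Kullback--Leibler divergence and bound it below by zero with Jensen's inequality, paralleling the Shannon--Kolmogorov step in the proof of Proposition~\ref{prop:prop1} but applied to the complete-data density $f_c$ in place of the observed-data density $f_o$. Writing the difference as
\begin{align*}
\mathscr{M}(\theta^0)-\mathscr{M}(\theta)=\mathbb{E}\Big[\log\tfrac{f_c(X\vert\theta^0)}{f_c(X\vert\theta)}\Big\vert\theta^0\Big]
\end{align*}
and using concavity of $\log$, Jensen's inequality gives
\begin{align*}
\mathbb{E}\Big[\log\tfrac{f_c(X\vert\theta)}{f_c(X\vert\theta^0)}\Big\vert\theta^0\Big]\leq\log\mathbb{E}\Big[\tfrac{f_c(X\vert\theta)}{f_c(X\vert\theta^0)}\Big\vert\theta^0\Big]=\log\int_{\mathcal{X}}f_c(x\vert\theta)\lambda(dx)=0,
\end{align*}
since $f_c(\cdot\vert\theta)$ integrates to one under $\lambda$ for every $\theta$. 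Hence $\mathscr{M}(\theta)\leq\mathscr{M}(\theta^0)$ for all $\theta\in\Theta$, which yields the claimed supremum inequality. I would flag that the bound stays non-strict because the complete-data model need not be identifiable: equality in Jensen holds exactly when $f_c(\cdot\vert\theta)=f_c(\cdot\vert\theta^0)$ $\lambda$-a.e., so unlike Proposition~\ref{prop:prop1} one cannot pass to a strict inequality here.

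For the stationarity condition $\mathscr{M}^{\prime}(\theta^0)=0$, the plan is to differentiate under the expectation and evaluate at the truth. Assumption~\ref{ass:A12}, i.e. condition \eqref{eq:bounded} with $m=1$, supplies an integrable bound on $\frac{\partial}{\partial\theta}\log f_c(x\vert\theta)$ (first conditionally on $Y=y$, then unconditionally after integrating against the law of $Y$ under $\theta^0$), so the dominated convergence theorem licenses interchanging differentiation and integration:
\begin{align*}
\mathscr{M}^{\prime}(\theta)=\mathbb{E}\Big[\frac{\partial\log f_c(X\vert\theta)}{\partial\theta}\Big\vert\theta^0\Big].
\end{align*}
Rewriting the score as $\frac{\partial}{\partial\theta}\log f_c=f_c^{-1}\frac{\partial}{\partial\theta}f_c$ and evaluating at $\theta^0$ collapses the expectation to $\int_{\mathcal{X}}\frac{\partial}{\partial\theta}f_c(x\vert\theta^0)\lambda(dx)$, which equals $\frac{\partial}{\partial\theta}\int_{\mathcal{X}}f_c(x\vert\theta)\lambda(dx)\big\vert_{\theta=\theta^0}=\frac{\partial}{\partial\theta}1=0$.

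The main obstacle is purely analytic: justifying the two interchanges of differentiation and integration --- equivalently, differentiating the normalization identity $\int_{\mathcal{X}}f_c(x\vert\theta)\lambda(dx)=1$ under the integral sign. Assumption~\ref{ass:A12} is designed to furnish exactly the uniform domination needed, so the residual work is to confirm the dominating function is integrable on a neighborhood of $\theta^0$ and then to apply dominated convergence; the information-inequality portion is thereafter routine.
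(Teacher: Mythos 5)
Your proof is correct and takes essentially the same route as the paper's: Jensen's inequality applied to the concave logarithm of the likelihood ratio $f_c(X\vert\theta)/f_c(X\vert\theta^0)$, whose expectation under $\theta^0$ equals one (you compute this directly as $\int_{\mathcal{X}}f_c(x\vert\theta)\lambda(dx)=1$, while the paper phrases the same fact as a Radon--Nikodym change of measure), giving $\mathscr{M}(\theta)\leq\mathscr{M}(\theta^0)$. Your argument for $\mathscr{M}^{\prime}(\theta^0)=0$ also matches the paper's, namely differentiation under the integral justified by condition (A1) together with the vanishing of the expected score at the true parameter; your extra remarks on non-strictness and on the domination needed for the interchange are sound elaborations rather than deviations.
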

\begin{proof}
The proof follows from concavity of log-function and application of Jensen's inequality, i.e.,
\begin{align*}
\mathscr{M}(\theta)-\mathscr{M}(\theta^0)\leq \log\mathbb{E}\Big[\frac{f_c(X\vert\theta)}{f_c(X\vert\theta^0)}\Big\vert \theta^0\Big]= 0,
\end{align*}
where the last equality is due to the fact that the likelihood ratio $\frac{f_c(X\vert\theta)}{f_c(X\vert\theta^0)}$ corresponds to the Radon-Nikodym derivative of changing the probability measure from $\mathbb{P}\{\bullet \vert\theta^0\}$ to $\mathbb{P}\{\bullet \vert\theta\}$ which results in $\mathbb{E}\big[ \frac{f_c(X\vert\theta)}{f_c(X\vert\theta^0)} \big\vert \theta^0\big]=1$. $\mathscr{M}^{\prime}(\theta^0)=0$ follows from \eqref{eq:bounded} and that $\mathbb{E}\big[\frac{\partial \log f_c(X\vert \theta)}{\partial \theta}\big\vert \theta\big]=0$.  
\end{proof}
Alternatively, by independence of $\{Y^k\}$, the law of large number shows that $\mathscr{S}_n(\theta)\stackrel{\mathbb{P}}{\Longrightarrow}\mathscr{S}(\theta)$ for any $\theta\in\Theta$. As the result, $\widehat{\theta}_n^0\stackrel{\mathbb{P}_{\vert\theta^0}}{\Longrightarrow}\theta^0$ and we have asymptotically that $\theta\approx\mathscr{S}_n^{-1}(\mathscr{S}(\theta))$. By consistency of $\widehat{\theta}_n^0$ and the regularity condition (\ref{eq:bounded}), we may apply as before the first order Taylor approximation around $\theta^0$ to $\mathscr{S}_n(\widehat{\theta}_n^0)$ to obtain
\begin{align}\label{eq:EMLEb}
\widehat{\theta}_n^0=\theta^0 + J_x^{-1}(\theta^0)\mathscr{S}_n(\theta^0),
\end{align}
where $J_x(\theta)$ defines the conditional observed information
\begin{align*}
J_x(\theta)=\frac{1}{n}\sum_{k=1}^n \mathbb{E}\Big[-\frac{\partial^2 \log f_c(X^k\vert\theta)}{\partial \theta^2}\Big\vert Y^k, \theta\Big],
\end{align*}
which by independence of $\{Y^k\}$ converges to the complete-data expected information $I_x(\theta)= \mathbb{E}\big[-\frac{\partial^2 \log f_c(X\vert\theta)}{\partial \theta^2}\big\vert \theta\big].$

In the section below, finite sample properties of the information matrices $J_y(\theta)$ and $J_x(\theta)$ are studied. They will be used to derive recursive estimations for the MLE $\widehat{\theta}_n$ and for the Cr\'armer-Rao lower bound $J_y^{-1}(\widehat{\theta}_n)$ for the finite-sample covariance matrix of the MLE. In particular, to obtain the limiting normal distribution of $\widehat{\theta}_n^0$ (\ref{eq:EMLEb}).

\subsection{Finite sample properties of information matrices}\label{sec:sec3}

The results below will be used to derive an explicit form of the observed information $J_y(\theta)$, to establish the inequality concerned with resulting loss of information in a finite-sample incomplete data, to obtain the $\sqrt{n}-$limiting normal distribution of the M-estimator $\widehat{\theta}_n^0$ (\ref{eq:EMLE}), and its asymptotic efficiency compared to the MLE $\widehat{\theta}_n$ (\ref{eq:pers1}). They have remained largely unexamined in literature on statistical analysis of incomplete data, see e.g., McLachlan and Krishnan (2008) and Little and Rubin (2020) for details.
\begin{proposition}\label{prop:prop2}
For any $\theta_i\in \theta$ and $Y\in\mathcal{Y}$, 
\begin{align}\label{eq:condscore2}
\mathbb{E}\Big[\frac{\partial \log f(X\vert Y,\theta)}{\partial \theta_i}\Big\vert Y,\theta\Big]=0.
\end{align}
\end{proposition}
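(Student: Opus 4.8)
The plan is to recognize the stated identity as the conditional analogue of the classical \textbf{zero-mean score} property, now applied to the conditional density $f(x\vert y,\theta)$ of \eqref{eq:condpdf} rather than to a marginal density. The starting point is that, by the definition of conditional expectation given $Y=y$, the left-hand side of \eqref{eq:condscore2} is an integral of the conditional score against the conditional density itself,
\begin{align*}
\mathbb{E}\Big[\frac{\partial \log f(X\vert Y,\theta)}{\partial \theta_i}\Big\vert Y=y,\theta\Big]=\int_{\mathcal{X}(y)}\frac{\partial \log f(x\vert y,\theta)}{\partial \theta_i}\,f(x\vert y,\theta)\,\lambda(dx),
\end{align*}
so it suffices to establish the identity for each fixed $y\in\mathcal{Y}$.

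First I would invoke the elementary log-derivative identity $\frac{\partial \log f(x\vert y,\theta)}{\partial \theta_i}\,f(x\vert y,\theta)=\frac{\partial f(x\vert y,\theta)}{\partial \theta_i}$, valid wherever $f(x\vert y,\theta)>0$, to rewrite the integrand as the $\theta_i$-partial derivative of the conditional density. This reduces the claim to showing that $\int_{\mathcal{X}(y)}\frac{\partial f(x\vert y,\theta)}{\partial \theta_i}\,\lambda(dx)=0$. The second step is to note that $f(\cdot\vert y,\theta)$ is a genuine probability density on $\mathcal{X}(y)$, so that the normalization $\int_{\mathcal{X}(y)}f(x\vert y,\theta)\,\lambda(dx)=1$ holds identically in $\theta$; indeed this is immediate from \eqref{eq:condpdf} together with \eqref{eq:cond}, since the two occurrences of $f_o(y\vert\theta)$ cancel. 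Differentiating this constant in $\theta_i$ then yields $0$, and the conclusion follows provided the derivative can be passed inside the integral.

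The main obstacle, as in every zero-mean score argument, is to justify interchanging $\partial/\partial\theta_i$ with the integral over $\mathcal{X}(y)$. Two features make this tractable here. The integration domain $\mathcal{X}(y)=\{x:T(x)=y\}$ does \emph{not} depend on $\theta$, so the Leibniz rule produces no boundary contributions and only the integrand must be controlled. Moreover, from \eqref{eq:condpdf} one has $\frac{\partial \log f(x\vert y,\theta)}{\partial\theta_i}=\frac{\partial \log f_c(x\vert\theta)}{\partial\theta_i}-\frac{\partial\log f_o(y\vert\theta)}{\partial\theta_i}$, where the second term is free of $x$, so the $x$-dependence is carried entirely by the complete-data score. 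I would therefore lean on Assumption \ref{ass:A12}: the bound \eqref{eq:bounded} with $m=1$ guarantees finiteness of the conditional first moment of $\big\vert\frac{\partial \log f_c(X\vert\theta)}{\partial\theta}\big\vert$ for every $\theta\in\Theta$, supplying the integrable dominating function needed to apply the dominated-convergence form of the Leibniz rule and legitimize differentiation under the integral sign. Once this interchange is granted, the two steps above combine to give \eqref{eq:condscore2} with no further computation required.
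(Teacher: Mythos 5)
Your proof is correct and takes essentially the same route as the paper's: write the conditional expectation as an integral against $f(x\vert y,\theta)$, use the log-derivative identity to turn the integrand into $\partial f(x\vert y,\theta)/\partial\theta_i$, and conclude by differentiating the normalization $\int_{\mathcal{X}(y)}f(x\vert y,\theta)\,\lambda(dx)=1$ under the integral sign, justified by Assumption \ref{ass:A12}. The only difference is that you spell out explicitly (the $\theta$-free domain, the score decomposition, the dominating function) what the paper compresses into ``on account of the regularity condition \eqref{eq:bounded} along with the identities \eqref{eq:cond} and \eqref{eq:condpdf}.''
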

\begin{proof}
Using the conditional density function \eqref{eq:condpdf}, 
\begin{align*}
&\mathbb{E}\Big[\frac{\partial \log f(X\vert Y,\theta)}{\partial \theta_i}\Big\vert Y=y,\theta\Big]\\
&\hspace{1cm}=
\int_{\mathcal{X}(y)}  \frac{\partial \log f(x\vert y,\theta)}{\partial \theta_i} f(x\vert y,\theta) \lambda(dx)\\
&\hspace{1cm}= \int_{\mathcal{X}(y)} \frac{\partial f(x\vert y,\theta)}{\partial \theta_i}  \lambda(dx),
\end{align*}
from which the claim follows on account of the regularity condition (\ref{eq:bounded}) along with the identities \eqref{eq:cond} and \eqref{eq:condpdf}.
\end{proof}
Applying (\ref{eq:condscore2}) to \eqref{eq:condpdf} after taking the logarithm,
\begin{eqnarray}\label{eq:score2}
\hspace{0.75cm}\frac{\partial \log f_c(X\vert\theta)}{\partial \theta_i}=\frac{\partial \log f_o(Y\vert\theta)}{\partial \theta_i}
+\frac{\partial \log f(X\vert Y,\theta)}{\partial \theta_i},
\end{eqnarray}
leads to the following identity, which corresponds to eqn. (3.44) in Section 3.7 of McLachlan and Krishnan (2008). 
\begin{corollary}\label{cor:mainprop}
For any $\theta_i\in\theta$ and $Y\in\mathcal{Y}$, 
\begin{align}\label{eq:id10}
\mathbb{E}\Big[ \frac{\partial \log f_c(X\vert\theta)}{\partial \theta_i}\Big\vert Y,\theta\Big]=\frac{\partial \log f_o(Y\vert\theta)}{\partial \theta_i}.
\end{align}
\end{corollary}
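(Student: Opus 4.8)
The plan is to take the conditional expectation of the additive score decomposition \eqref{eq:score2} given $Y$ and $\theta$, and to exploit the linearity of the conditional expectation operator together with Proposition \ref{prop:prop2}. Concretely, I would apply $\mathbb{E}[\,\cdot\,\vert Y,\theta]$ to both sides of the pathwise identity
\begin{align*}
\frac{\partial \log f_c(X\vert\theta)}{\partial \theta_i}=\frac{\partial \log f_o(Y\vert\theta)}{\partial \theta_i}+\frac{\partial \log f(X\vert Y,\theta)}{\partial \theta_i}.
\end{align*}
The regularity condition \eqref{eq:bounded} for $m=1$ guarantees that each term above is conditionally integrable, so that distributing the conditional expectation across the sum is legitimate.

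First I would observe that the term $\frac{\partial \log f_o(Y\vert\theta)}{\partial \theta_i}$ depends on $X$ only through $Y=T(X)$ and is therefore $\sigma(Y)$-measurable. By the ``taking out what is known'' property of conditional expectation, it equals its own conditional expectation given $Y$, namely
\begin{align*}
\mathbb{E}\Big[\frac{\partial \log f_o(Y\vert\theta)}{\partial \theta_i}\Big\vert Y,\theta\Big]=\frac{\partial \log f_o(Y\vert\theta)}{\partial \theta_i}.
\end{align*}
Next I would invoke Proposition \ref{prop:prop2}, which already established that the conditional score of the complete-data-given-incomplete-data density vanishes, i.e. $\mathbb{E}\big[\frac{\partial \log f(X\vert Y,\theta)}{\partial \theta_i}\big\vert Y,\theta\big]=0$. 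Substituting these two facts into the conditional expectation of the decomposition then yields the asserted identity \eqref{eq:id10} at once.

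I do not anticipate any substantive obstacle: the essential work has already been carried out in Proposition \ref{prop:prop2} (the vanishing of the conditional score), and the only technical point to check is the integrability required to split the conditional expectation over the two summands, which is exactly what Assumption \ref{ass:A12} supplies. The result is simply the conditional-expectation counterpart of the pathwise score identity \eqref{eq:score2}.
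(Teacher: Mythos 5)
Your proposal is correct and is essentially the paper's own argument: the paper obtains the corollary by taking the logarithm of \eqref{eq:condpdf} to get the score decomposition \eqref{eq:score2}, then applying the conditional expectation $\mathbb{E}[\,\cdot\,\vert Y,\theta]$ and using Proposition \ref{prop:prop2} (i.e.\ \eqref{eq:condscore2}) to annihilate the conditional-score term, exactly as you do. Your explicit remarks on the $\sigma(Y)$-measurability of the observed-data score and on the integrability supplied by \eqref{eq:bounded} merely spell out steps the paper leaves implicit.
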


The identity (\ref{eq:id10}) identifies $\mathscr{S}_n(\theta^0)$ (\ref{eq:solsol}) as $S_n(\theta^0)$ (\ref{eq:score}). The result of Proposition \ref{prop:prop2} leads to the following fact.

\begin{corollary}\label{cor:zerocov}
For any $(\theta_i,\theta_j)\in\theta$ and $Y\in\mathcal{Y}$,
\begin{align*}
\mathrm{Cov}\Big(\frac{\partial \log f_o(Y\vert\theta)}{\partial \theta_i},\frac{\partial \log f(X\vert Y,\theta)}{\partial \theta_j}\Big\vert Y,\theta\Big)=0.
\end{align*}
\end{corollary}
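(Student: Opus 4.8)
The plan is to exploit two facts: the observed-data score $\frac{\partial \log f_o(Y\vert\theta)}{\partial \theta_i}$ is a function of $Y$ alone (hence deterministic once we condition on $Y$), and by Proposition \ref{prop:prop2} the conditional score $\frac{\partial \log f(X\vert Y,\theta)}{\partial \theta_j}$ has zero conditional mean given $Y$. Together these force the conditional covariance to collapse to zero.

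First I would set $A=\frac{\partial \log f_o(Y\vert\theta)}{\partial \theta_i}$ and $B=\frac{\partial \log f(X\vert Y,\theta)}{\partial \theta_j}$ and expand the conditional covariance by its definition, $\mathrm{Cov}(A,B\vert Y,\theta)=\mathbb{E}[AB\vert Y,\theta]-\mathbb{E}[A\vert Y,\theta]\,\mathbb{E}[B\vert Y,\theta]$. Next I would observe that $A$ depends on $Y$ but not on $X$, so it is $Y$-measurable and can be pulled outside every conditional expectation given $Y$: $\mathbb{E}[A\vert Y,\theta]=A$ and $\mathbb{E}[AB\vert Y,\theta]=A\,\mathbb{E}[B\vert Y,\theta]$. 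Finally, applying Proposition \ref{prop:prop2} yields $\mathbb{E}[B\vert Y,\theta]=0$, so both terms on the right-hand side vanish and the covariance equals $0$. (Even without invoking Proposition \ref{prop:prop2}, the two terms would each be $A\,\mathbb{E}[B\vert Y,\theta]$ and still cancel identically.)

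I expect no genuine obstacle here. The finiteness of the conditional expectations involved is already guaranteed by Assumption \ref{ass:A12}, condition (\ref{eq:bounded}), so there is no integrability issue to settle and no limiting argument is needed. The only step carrying any content is recognising the $Y$-measurability of the observed-data score, which is exactly what licenses factoring it out of the conditioning; everything else is the routine algebra of conditional covariance combined with the zero-mean identity of Proposition \ref{prop:prop2}.
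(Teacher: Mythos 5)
Your proof is correct and matches the paper's intended argument: the paper states the corollary as an immediate consequence of Proposition \ref{prop:prop2}, relying on exactly the two facts you isolate, namely the $Y$-measurability of $\frac{\partial \log f_o(Y\vert\theta)}{\partial \theta_i}$ and the zero conditional mean of $\frac{\partial \log f(X\vert Y,\theta)}{\partial \theta_j}$. Your write-up simply makes the routine conditional-covariance algebra explicit, which the paper leaves to the reader.
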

The result above shows that $\frac{\partial \log f_o(Y\vert\theta)}{\partial \theta_i}$ and $\frac{\partial \log f(X\vert Y,\theta)}{\partial \theta_j}$ are conditionally independent given incomplete data $Y$.

To show the Loewner partial ordering of $J_y(\theta)$ and $J_x(\theta)$ and to derive an explicit form of the observed information matrix $J_y(\theta)$, the results below are required. 

\begin{theorem}\label{theo:theomain}
The conditional information matrix $$J_{x\vert y}(\theta_i,\theta_j):=\mathbb{E}\Big[-\frac{\partial^2 \log f(X\vert Y, \theta)}{\partial \theta_i\partial \theta_j} \Big\vert Y, \theta\Big]$$
is positive definite for any $\theta\in\Theta$ satisfying the equation
\begin{align*}
J_{x\vert y}(\theta_i,\theta_j) = \mathbb{E}\Big[\frac{\partial \log f(X\vert Y,\theta)}{\partial \theta_i}\frac{\partial \log f(X\vert Y,\theta)}{\partial \theta_j}\Big\vert Y,\theta\Big].
\end{align*}
\end{theorem}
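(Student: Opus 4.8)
The plan is to split the statement into two parts and dispatch them in sequence: first establish the information-matrix identity relating the conditional negative Hessian to the conditional outer product of the score, and then read off positive definiteness from the resulting representation. The only inputs I expect to need are the normalization of the conditional density from \eqref{eq:condpdf}, namely $\int_{\mathcal{X}(y)} f(x\vert y,\theta)\lambda(dx)=1$, the interchange of differentiation and integration licensed by the integrability bound \eqref{eq:bounded}, and the zero-mean property of the conditional score already recorded in Proposition \ref{prop:prop2}.

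For the identity, I would differentiate the normalization twice. Differentiating once in $\theta_i$ and writing $\frac{\partial f}{\partial\theta_i}=\frac{\partial\log f}{\partial\theta_i}\,f$ reproduces Proposition \ref{prop:prop2}. Differentiating a second time in $\theta_j$ and applying the product rule to $\frac{\partial\log f}{\partial\theta_i}\,f$ gives, after again substituting $\frac{\partial f}{\partial\theta_j}=\frac{\partial\log f}{\partial\theta_j}\,f$, that the conditional expectation of $\frac{\partial^2\log f}{\partial\theta_i\partial\theta_j}+\frac{\partial\log f}{\partial\theta_i}\frac{\partial\log f}{\partial\theta_j}$ vanishes. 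Rearranging is exactly the claimed identity. The second differentiation under the integral sign is the only analytic point, and it is covered by the $m=2$ case of \eqref{eq:bounded} together with the $m=1$ case controlling the cross term.

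With the identity in hand, positive semidefiniteness is immediate: writing $U$ for the conditional score vector with entries $U_i=\frac{\partial\log f(X\vert Y,\theta)}{\partial\theta_i}$, the identity reads $J_{x\vert y}=\mathbb{E}[UU^{\top}\vert Y,\theta]$, so for any $a\in\mathbb{R}^{d}$ one has $a^{\top}J_{x\vert y}\,a=\mathbb{E}[(a^{\top}U)^{2}\vert Y,\theta]\ge 0$. The hard part will be upgrading this to strict positive definiteness: $a^{\top}J_{x\vert y}\,a=0$ forces $a^{\top}U=0$ for $\lambda$-almost every $x\in\mathcal{X}(y)$, so one must rule out a nontrivial linear dependence among the components of the conditional score. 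I would close this by invoking non-degeneracy of the parametrization, i.e. conditional linear independence of the functions $\frac{\partial\log f}{\partial\theta_i}$ of $X$ given $Y$, an identifiability condition implicit in treating $d=\vert\theta\vert$ as the effective dimension, which leaves $a=0$ as the only solution. This non-degeneracy is the substantive hypothesis; the remainder is the standard log-derivative computation.
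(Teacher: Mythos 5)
For the identity, your computation is the paper's own in different packaging: the paper expands $\big[-\partial^2\log f/\partial\theta_i\partial\theta_j\big]\,f$ by the chain rule, integrates over $\mathcal{X}(y)$ under (\ref{eq:bounded}), and kills the resulting total-derivative term $\frac{\partial}{\partial\theta_i}\mathbb{E}\big[\partial\log f/\partial\theta_j\,\big\vert\, Y,\theta\big]$ via Proposition~\ref{prop:prop2} --- which is exactly what differentiating the normalization $\int_{\mathcal{X}(y)} f(x\vert y,\theta)\,\lambda(dx)=1$ twice amounts to; both rest on the same interchange of differentiation and integration.

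The substantive difference is the positive-definiteness step, and there your treatment is more careful than the paper's --- indeed it repairs a flaw. The paper sets $S_{x\vert y}(\theta)=\partial \log f(X\vert Y,\theta)/\partial\theta$, writes $J_{x\vert y}(\theta)=S_{x\vert y}(\theta)S_{x\vert y}(\theta)^{\top}$ (dropping the conditional expectation that its own identity carries), and then asserts $z^{\top}S_{x\vert y}S_{x\vert y}^{\top}z>0$ for every $z\neq 0$. As stated this is false for $d\geq 2$: the outer product $S_{x\vert y}S_{x\vert y}^{\top}$ has rank one, so any nonzero $z$ orthogonal to $S_{x\vert y}$ annihilates it. With the expectation restored, only positive semidefiniteness follows, exactly as you say: $a^{\top}J_{x\vert y}\,a=\mathbb{E}\big[(a^{\top}U)^{2}\,\big\vert\, Y,\theta\big]\geq 0$, vanishing if and only if $a^{\top}U=0$ for $\lambda$-a.e.\ $x\in\mathcal{X}(y)$. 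Strict definiteness therefore needs precisely the non-degeneracy hypothesis you flag --- no almost-sure linear dependence among the components of the conditional score given $Y$ --- and the paper states no such hypothesis. Nor is it automatic: if $T$ were injective, or if $f(x\vert y,\theta)$ did not depend on some coordinate of $\theta$, the conditional score would have an identically vanishing component and $J_{x\vert y}$ would be singular. So what you call the ``hard part'' is a genuine gap in the paper's own proof, not in yours; your proposal, with the identifiability condition made explicit, is the corrected argument rather than a different one.
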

\begin{proof} 
To prove the identity, by the chain rule the identity below holds for any $(\theta_i,\theta_j)\in \theta$ and $y\in\mathcal{Y}$,
\begin{align*}
&\Big[-\frac{\partial^2 \log f(x\vert y,\theta)}{\partial \theta_i \partial \theta_j}\Big] f(x\vert y,\theta)\\
&\hspace{1cm}=-\frac{\partial}{\partial \theta_i}\Big[\Big[\frac{\partial \log f(x\vert y,\theta)}{\partial \theta_j}\Big]f(x\vert y,\theta)\Big]\\
&\hspace{1.5cm}+\Big[\frac{\partial \log f(x\vert y,\theta)}{\partial \theta_i}\Big]\Big[\frac{\partial \log f(x\vert y,\theta)}{\partial \theta_j}\Big] f(x\vert y,\theta).
\end{align*} 
Therefore, following the regularity condition (\ref{eq:bounded}), 
\begin{align*}
&\mathbb{E}\Big[-\frac{\partial^2 \log f(X\vert Y,\theta)}{\partial \theta_i \partial \theta_j}\Big\vert Y=y,\theta\Big]\\
&\hspace{0cm}=-\int_{\mathcal{X}(y)} \Big[\frac{\partial^2 \log f(x\vert y,\theta)}{\partial \theta_i \partial \theta_j}\Big] f(x\vert y,\theta)\lambda(dx)\\
&\hspace{0cm}=-\int_{\mathcal{X}(y)} \frac{\partial}{\partial \theta_i}\Big[\Big[\frac{\partial \log f(x\vert y,\theta)}{\partial \theta_j}\Big]f(x\vert y,\theta)\Big] \lambda(dx)\\
&\hspace{0cm}+\int_{\mathcal{X}(y)} \Big[\frac{\partial \log f(x\vert y,\theta)}{\partial \theta_i}\Big]\Big[\frac{\partial \log f(x\vert y,\theta)}{\partial \theta_j}\Big] f(x\vert y,\theta)\lambda(dx)\\
&\hspace{0cm}= - \frac{\partial}{\partial \theta_i}\mathbb{E}\Big[ \frac{\partial \log f(X\vert Y,\theta)}{\partial \theta_j}\Big\vert Y=y,\theta\Big] \\
& + \mathbb{E}\Big[\Big(\frac{\partial \log f(X\vert Y,\theta)}{\partial \theta_i}\Big)\Big(\frac{\partial \log f(X\vert Y,\theta)}{\partial \theta_j}\Big)\Big\vert Y=y,\theta\Big],
\end{align*}
which leads to the identity on account of (\ref{eq:condscore2}). To show that $J_{x\vert y}(\theta)$ is positive definite, let $S_{x\vert y}(\theta)=\frac{\partial \log f(X\vert Y,\theta)}{\partial \theta}$. Thus, $J_{x\vert y}(\theta)=S_{x\vert y}(\theta)S_{x\vert y}(\theta)^{\top}$. Hence, for any $0\neq z\in\mathbb{R}^d$, $z^{\top}S_{x\vert y}(\theta)S_{x\vert y}(\theta)^{\top}z>0$ showing the information matrix $J_{x\vert y}(\theta)$ is positive definite.
\end{proof}

Taking expectation w.r.t to $\mathbb{P}\{\bullet \vert \theta \}$ on both sides of the last identity leads to the fact that $-\frac{\partial^2 \log f(X\vert Y, \theta)}{\partial \theta^2} $ is the observed Fisher information, see Schervish (1995). 
\begin{theorem}[\textbf{Resulting loss of information in incomplete data}]\label{theo:theo1} For any $\theta\in\Theta$ and $Y\in\mathcal{Y}$, 
\begin{align}\label{eq:infoloss}
J_x(\theta) > J_y(\theta).
\end{align}
\end{theorem}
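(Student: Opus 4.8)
The plan is to establish the matrix identity $J_x(\theta) = J_y(\theta) + J_{x\vert y}(\theta)$ at the level of a single observation and then invoke the positive definiteness of $J_{x\vert y}(\theta)$ from Theorem \ref{theo:theomain}. The natural starting point is the score decomposition (\ref{eq:score2}), which already splits the complete-data score into the observed-data score plus the conditional score; differentiating it once more should transfer this additive structure to the Hessians.

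First I would apply $\partial/\partial\theta_j$ to both sides of (\ref{eq:score2}), which gives, for each $k$, the pointwise identity
\begin{align*}
-\frac{\partial^2 \log f_c(X^k\vert\theta)}{\partial \theta_i \partial \theta_j} = -\frac{\partial^2 \log f_o(Y^k\vert\theta)}{\partial \theta_i \partial \theta_j} - \frac{\partial^2 \log f(X^k\vert Y^k,\theta)}{\partial \theta_i \partial \theta_j}.
\end{align*}
Next I would take the conditional expectation $\mathbb{E}[\cdot\vert Y^k,\theta]$ of each term. The crucial observation is that the observed-data Hessian $-\partial^2\log f_o(Y^k\vert\theta)/\partial\theta_i\partial\theta_j$ is a function of $Y^k$ alone, so conditioning on $Y^k$ leaves it unchanged; the first term becomes the $(i,j)$ entry of the per-observation complete-data conditional observed information, and the last becomes precisely $J_{x\vert y}(\theta_i,\theta_j)$. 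Averaging over $k=1,\dots,n$ and reading off the definitions of $J_x(\theta)$ and $J_y(\theta)$ then yields $J_x(\theta) = J_y(\theta) + \frac{1}{n}\sum_{k=1}^n J_{x\vert y}^{(k)}(\theta)$, where $J_{x\vert y}^{(k)}$ denotes the conditional information matrix of the $k$-th observation.

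To finish, I would note that by Theorem \ref{theo:theomain} each $J_{x\vert y}^{(k)}(\theta)$ is positive definite, and a sum — hence a positive multiple of an average — of positive definite matrices is again positive definite. Therefore $J_x(\theta) - J_y(\theta) = \frac{1}{n}\sum_{k=1}^n J_{x\vert y}^{(k)}(\theta)$ is positive definite, which is exactly the Loewner inequality (\ref{eq:infoloss}).

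I expect the only genuine care to be needed in justifying the interchange of the second differentiation with the integral defining the conditional expectation, which is licensed by the regularity bound (\ref{eq:bounded}) with $m=2$, and in recording that the observed-data term is $Y$-measurable so its conditional expectation is trivial. Everything else is bookkeeping: the displayed decomposition is the matrix form of the classical missing-information principle, and strictness of the inequality is inherited directly from the strict positive definiteness asserted in Theorem \ref{theo:theomain}.
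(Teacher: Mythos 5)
Your proposal is correct and follows essentially the same route as the paper: differentiate the score decomposition (\ref{eq:score2}), take the conditional expectation $\mathbb{E}[\cdot\vert Y,\theta]$ to obtain the missing-information identity $J_x(\theta)-J_y(\theta)=J_{x\vert y}(\theta)$, and conclude by the positive definiteness established in Theorem \ref{theo:theomain}. The only difference is that you spell out the per-observation averaging over $k=1,\dots,n$ explicitly (which the paper defers to the proof of Theorem \ref{theo:infomat}), a harmless and slightly more careful bookkeeping step.
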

\begin{proof} Let $J_x(\theta_i,\theta_j)$ be the $(i,j)-$element of $J_x(\theta)$, similarly defined for $J_y(\theta_i,\theta_j)$. Taking derivative of (\ref{eq:score2}) yields $-\frac{\partial^2 \log f(X\vert Y,\theta)}{\partial \theta_i\partial \theta_j}=-\frac{\partial^2 \log f_c(X\vert \theta)}{\partial \theta_i\partial \theta_j} + \frac{\partial^2 \log f_o(Y\vert\theta)}{\partial \theta_i \partial \theta_j}.$ By taking conditional expectation $\mathbb{E}\big[\bullet \big\vert Y,\theta\big]$ we obtain, 
\begin{align*}
J_{x\vert y}(\theta_i,\theta_j)=J_x(\theta_i,\theta_j) - J_y(\theta_i,\theta_j).
\end{align*}
The inequality (\ref{eq:infoloss}) follows from the fact that the information matrix $J_{x\vert y}(\theta)$ is positive definite, by Theorem \ref{theo:theomain}.
 \end{proof}

By Theorem \ref{theo:theo1} we deduce the following inequality corresponding to the resulting information loss presented in incomplete-data. The inequality was discussed in Orchard and Woodbury (1972), Blahut (1987), Theorem 2.86 of Schervish (1995), and McLachlan and Krishnan (2008). 
\begin{align}\label{eq:posdef}
I_x(\theta) > I_y(\theta)>0.
\end{align}
Note that the positive definiteness of $I_y(\theta)$ is due to the fact that it is the expected Fisher information matrix satisfying $I_y(\theta)=\mathbb{E}\big[\big(\frac{\partial \log f_o(Y\vert\theta)}{\partial \theta}\big) \big(\frac{\partial \log f_o(Y\vert\theta)}{\partial \theta}\big)^{\top}\big\vert\theta\big]>0.$

By applying the conditional probability density (\ref{eq:condpdf}) and the above identities, we derive the following result.

\begin{proposition}\label{prop:proptheo2}
For any $(\theta_i,\theta_j)\in \theta$ and $Y\in\mathcal{Y}$,
\begin{align*}
J_{x\vert y}(\theta_i,\theta_j)=& \mathbb{E}\Big[ \Big(\frac{\partial \log f_c(X\vert\theta)}{\partial \theta_i}\Big)\Big(\frac{\partial \log f_c(X\vert\theta)}{\partial \theta_j}\Big)\Big\vert Y,\theta\Big]  \nonumber\\
&\hspace{-1.5cm}-\mathbb{E}\Big[ \frac{\partial \log f_c(X\vert\theta)}{\partial \theta_i}\Big\vert Y,\theta\Big]\mathbb{E}\Big[ \frac{\partial \log f_c(X\vert\theta)}{\partial \theta_j}\Big\vert Y,\theta\Big].\nonumber
\end{align*}
\end{proposition}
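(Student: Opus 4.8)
The plan is to recognize the right-hand side as the conditional covariance of the complete-data score given $Y$, and to reduce the claim to the characterization of $J_{x\vert y}(\theta_i,\theta_j)$ already obtained in Theorem \ref{theo:theomain}. I would start from that result, namely
\begin{align*}
J_{x\vert y}(\theta_i,\theta_j) = \mathbb{E}\Big[\frac{\partial \log f(X\vert Y,\theta)}{\partial \theta_i}\frac{\partial \log f(X\vert Y,\theta)}{\partial \theta_j}\Big\vert Y,\theta\Big],
\end{align*}
so that it suffices to re-express the conditional scores $\partial_{\theta_i}\log f(X\vert Y,\theta)$ in terms of the complete-data scores $\partial_{\theta_i}\log f_c(X\vert\theta)$.

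Next I would invoke the additive decomposition (\ref{eq:score2}) of the complete-data score together with Corollary \ref{cor:mainprop}, which identifies the observed-data score $\partial_{\theta_i}\log f_o(Y\vert\theta)$ with the conditional mean of the complete-data score. Substituting gives
\begin{align*}
\frac{\partial \log f(X\vert Y,\theta)}{\partial \theta_i} = \frac{\partial \log f_c(X\vert\theta)}{\partial \theta_i} - \mathbb{E}\Big[\frac{\partial \log f_c(X\vert\theta)}{\partial \theta_i}\Big\vert Y,\theta\Big],
\end{align*}
that is, the conditional score is precisely the complete-data score centered by its conditional mean given $Y$. Plugging this (and its analogue for $\theta_j$) into the product inside the conditional expectation then turns $J_{x\vert y}(\theta_i,\theta_j)$ into the conditional expectation of a product of two centered quantities.

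Finally I would expand that product and apply linearity of $\mathbb{E}[\,\cdot\,\vert Y,\theta]$. The crucial observation is that the centering terms $\mathbb{E}[\partial_{\theta_i}\log f_c(X\vert\theta)\vert Y,\theta]$ are $\sigma(Y)$-measurable, hence behave as constants under $\mathbb{E}[\,\cdot\,\vert Y,\theta]$ and factor out; the three remaining cross terms collapse to a single product of conditional means, leaving exactly the claimed identity. I do not expect a genuine obstacle here: the only points requiring care are the measurability argument that lets the centering terms be pulled out of the conditional expectation, and the integrability needed to split the expectation of the product, both of which are supplied by the regularity condition (\ref{eq:bounded}). The remainder is routine bookkeeping, so the essential content of the statement is simply the recognition that $J_{x\vert y}$ is the conditional variance--covariance of the complete-data score given the incomplete observation $Y$.
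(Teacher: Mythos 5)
Your proof is correct and is essentially the paper's own argument run in the opposite direction: the paper starts from $\mathrm{Cov}\big(\partial_{\theta_i}\log f_c,\partial_{\theta_j}\log f_c\,\big\vert\, Y,\theta\big)$, kills the cross terms via Corollary \ref{cor:zerocov}, and lands on $\mathbb{E}\big[\partial_{\theta_i}\log f(X\vert Y,\theta)\,\partial_{\theta_j}\log f(X\vert Y,\theta)\,\big\vert\, Y,\theta\big]=J_{x\vert y}(\theta_i,\theta_j)$ by Theorem \ref{theo:theomain}, whereas you start from Theorem \ref{theo:theomain} and substitute the centered representation $\partial_{\theta_i}\log f(X\vert Y,\theta)=\partial_{\theta_i}\log f_c(X\vert\theta)-\mathbb{E}\big[\partial_{\theta_i}\log f_c(X\vert\theta)\,\big\vert\, Y,\theta\big]$ obtained from (\ref{eq:score2}) and Corollary \ref{cor:mainprop}. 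The cross-term cancellation you do by $\sigma(Y)$-measurability is exactly the content of Corollary \ref{cor:zerocov}, so the two proofs use the same decomposition and the same key facts.
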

\begin{proof} Using (\ref{eq:score2}), the results of Proposition \ref{prop:prop2}, Corollary \ref{cor:zerocov} and that $\mathrm{Cov}\left(\frac{\partial \log f_o(Y\vert\theta)}{\partial \theta_i},\frac{\partial \log f_o(Y\vert\theta)}{\partial \theta_j}\Big\vert Y,\theta\right)=0$,
\begin{align*}
&\mathrm{Cov}\Big(\frac{\partial \log f_c(X\vert\theta)}{\partial \theta_i},\frac{\partial \log f_c(X\vert\theta)}{\partial \theta_j}\Big\vert Y,\theta\Big)\\
&=\mathrm{Cov}\Big(\frac{\partial \log f(X\vert Y,\theta)}{\partial \theta_i},\frac{\partial \log f(X\vert Y,\theta)}{\partial \theta_j}\Big\vert Y,\theta\Big)\\
&= \mathbb{E}\Big[\frac{\partial \log f(X\vert Y,\theta)}{\partial \theta_i}\frac{\partial \log f(X\vert Y,\theta)}{\partial \theta_j}\Big\vert Y,\theta\Big],
\end{align*}
which is equal to $J_{x\vert y}(\theta_i,\theta_j)$ by Theorem \ref{theo:theomain}.
\end{proof}

The result below presents explicit form of the $(i,j)$-element $J_y(\theta_i,\theta_j)=-\frac{1}{n}\sum_{k=1}^n \frac{\partial^2\log f_o(Y^k\vert\theta)}{\partial \theta_i\partial \theta_j}$, with $(\theta_i,\theta_j)\in\theta$, of the observed Fisher information matrix $J_y(\theta)$.
\begin{theorem}\label{theo:infomat}
[(i)] The $(i,j)$-element of the observed information $J_y(\theta)$ for incomplete data $Y=\bigcup_{k=1}^n Y^k$ is  
\begin{align*}
J_y(\theta_i,\theta_j)=&\frac{1}{n}\sum_{k=1}^n \mathbb{E}\Big[-\frac{\partial^2  \log f_c(X^k\vert\theta)}{\partial \theta_i \partial \theta_j}\Big\vert Y^k,\theta\Big] \nonumber\\
&\hspace{-2cm}-\frac{1}{n}\sum_{k=1}^n \mathbb{E}\Big[ \Big(\frac{\partial \log f_c(X^k\vert\theta)}{\partial \theta_i}\Big)
\Big(\frac{\partial \log f_c(X^k\vert\theta)}{\partial \theta_j}\Big)\Big\vert Y^k,\theta\Big]  \nonumber\\
&\hspace{-2cm}+\frac{1}{n}\sum_{k=1}^n\mathbb{E}\Big[ \frac{\partial \log f_c(X^k\vert\theta)}{\partial \theta_i}\Big\vert Y^k,\theta\Big]\mathbb{E}\Big[ \frac{\partial \log f_c(X^k\vert\theta)}{\partial \theta_j}\Big\vert Y^k,\theta\Big].\nonumber
\end{align*}
[(ii)] Both $J_y(\widehat{\theta}_n)$ and $J_x(\widehat{\theta}_n)$ are positive definite with 
\begin{align}\label{eq:infoloss2}
J_x(\widehat{\theta}_n) > J_y(\widehat{\theta}_n)>0.
\end{align}
\end{theorem}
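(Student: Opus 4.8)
The plan is to obtain Part (i) as a purely algebraic consequence of the decomposition already appearing in the proof of Theorem \ref{theo:theo1}, and to reduce Part (ii) to the single new fact that $J_y(\widehat{\theta}_n)$ is positive definite.

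For Part (i), I would start from the second-derivative form of the score splitting \eqref{eq:score2}, namely
$$-\frac{\partial^2 \log f(X^k\vert Y^k,\theta)}{\partial \theta_i \partial \theta_j} = -\frac{\partial^2 \log f_c(X^k\vert\theta)}{\partial \theta_i \partial \theta_j} + \frac{\partial^2 \log f_o(Y^k\vert\theta)}{\partial \theta_i \partial \theta_j},$$
take the conditional expectation $\mathbb{E}[\,\cdot\,\vert Y^k,\theta]$ term by term (the last term is $Y^k$-measurable and so passes through unchanged), and average over the $n$ independent observations. This reproduces, in aggregated form, the identity $J_{x\vert y}(\theta_i,\theta_j)=J_x(\theta_i,\theta_j)-J_y(\theta_i,\theta_j)$ of Theorem \ref{theo:theo1}, which I then solve for $J_y(\theta_i,\theta_j)=J_x(\theta_i,\theta_j)-J_{x\vert y}(\theta_i,\theta_j)$. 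The first summand is exactly the conditional observed information $J_x(\theta_i,\theta_j)$, i.e. the first line of the claimed formula; for the second I substitute the conditional-covariance expression for $J_{x\vert y}(\theta_i,\theta_j)$ supplied by Proposition \ref{prop:proptheo2}, which splits into the outer-product term and the product-of-conditional-means term. Collecting the three averaged sums gives the asserted expression, so this step is a matter of bookkeeping once the two earlier results are in hand.

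For Part (ii), the strict Loewner ordering $J_x(\widehat{\theta}_n) > J_y(\widehat{\theta}_n)$ is simply Theorem \ref{theo:theo1} specialized to $\theta=\widehat{\theta}_n$; equivalently, their difference equals $J_{x\vert y}(\widehat{\theta}_n)$, which is positive definite by Theorem \ref{theo:theomain} and hence independent of any positivity of $J_y$ itself. Thus the only genuinely new claim is $J_y(\widehat{\theta}_n)>0$, after which $J_x(\widehat{\theta}_n)=J_y(\widehat{\theta}_n)+J_{x\vert y}(\widehat{\theta}_n)>0$ follows at once as a sum of two positive definite matrices. I would establish the semidefinite bound $J_y(\widehat{\theta}_n)\ge 0$ from the fact that $\widehat{\theta}_n$ is an interior maximizer of the smooth observed-data log-likelihood $\sum_{k=1}^n \log f_o(Y^k\vert\theta)$, so that its averaged negative Hessian $J_y(\widehat{\theta}_n)$ is the observed information at a maximum and is therefore positive semidefinite.

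The main obstacle is upgrading this semidefiniteness to strict positive definiteness in finite samples. I would close the gap by combining the consistency of $\widehat{\theta}_n$ (Proposition \ref{prop:prop1}) with the law of large numbers, which gives $J_y(\widehat{\theta}_n)\to I_y(\theta^0)$, together with the strict positive definiteness $I_y(\theta^0)>0$ recorded in \eqref{eq:posdef}. Since the set of positive definite matrices is open, $J_y(\widehat{\theta}_n)$ must lie in it for all sufficiently large $n$; alternatively, strictness is immediate whenever the maximum is assumed non-degenerate. This convergence-and-openness step is the delicate point, since it is the one place where the finite-sample statement must lean on an asymptotic (or non-degeneracy) input rather than on the purely algebraic identities used in the rest of the argument.
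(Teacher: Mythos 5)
Your proposal follows essentially the same route as the paper's own proof: for (i), differentiating the score splitting \eqref{eq:score2}, taking conditional expectation given $Y^k$ to obtain $J_y^k(\theta_i,\theta_j)=J_x^k(\theta_i,\theta_j)-J_{x\vert y}^k(\theta_i,\theta_j)$, substituting the conditional-covariance form of Proposition \ref{prop:proptheo2}, and averaging over $k$; and for (ii), combining \eqref{eq:infoloss} with the fact that $\widehat{\theta}_n$ maximizes the observed-data log-likelihood \eqref{eq:criteriamle}. The only difference is that you explicitly flag and patch the semidefinite-to-definite gap in $J_y(\widehat{\theta}_n)>0$ (via consistency of $\widehat{\theta}_n$, the strict inequality $I_y(\theta^0)>0$, and openness of the positive definite cone), a point the paper's terse proof passes over in silence — a worthwhile refinement of, rather than a departure from, the paper's argument.
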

\begin{proof} Let $J_y^k(\theta_i,\theta_j)=- \frac{\partial^2\log f_o(Y^k\vert\theta)}{\partial \theta_i\partial \theta_j}$, $J_x^k(\theta_i,\theta_j)=\mathbb{E}\big[-\frac{\partial^2 \log f_c(X^k\vert\theta)}{\partial \theta_i\partial \theta_j}\big\vert Y^k,\theta\big]$. Similarly for $J_{x\vert y}^k(\theta_i,\theta_j)$. From (\ref{eq:score2}), $-\frac{\partial^2 \log f_o(Y^k\vert\theta)}{\partial \theta_i\partial \theta_j}=-\frac{\partial^2 \log f_c(X^k\vert\theta)}{\partial \theta_i\partial \theta_j} + \frac{\partial^2 \log f(X^k\vert Y^k,\theta)}{\partial \theta_i\partial \theta_j}.$ Taking conditional expectation $\mathbb{E}\big[\bullet \big\vert Y^k,\theta\big]$ we obtain
\begin{align*}
J_y^k(\theta_i,\theta_j)=J_x^k(\theta_i,\theta_j)-J_{x\vert y}^k(\theta_i,\theta_j).
\end{align*}
The proof of (i) is complete by the result of Proposition \ref{prop:proptheo2} on account that $J_y(\theta_i,\theta_j)=\frac{1}{n}\sum_{k=1}^n J_y^k(\theta_i,\theta_j)=\frac{1}{n}\sum_{k=1}^nJ_x^k(\theta_i,\theta_j)-\frac{1}{n}\sum_{k=1}^nJ_{x\vert y}^k(\theta_i,\theta_j)$. The claim (ii) is due to (\ref{eq:infoloss}) and the fact that $\widehat{\theta}_n$ maximizes (\ref{eq:criteriamle}).
\end{proof}

\begin{Rem}
Notice that the observed Fisher information $J_y(\theta)$ takes a slightly different form than Louis (1982) general matrix formula. The main differences with the latter is that it simplifies the calculation of conditional expectation of the outer product of the complete-data score function appeared in the Louis' matrix formula. And most notably, it directly verifies the asymptotic consistency of $J_y(\theta)$ to the incomplete-data Fisher information matrix $I_y(\theta)$ as the sample size $n$ increases. Also, the derivation is much simplified compared to the approach of Louis (1982) and McLachlan and Krishnan (2008).
\end{Rem}
The expression of $(i,j)-$element of information matrix $J_y(\theta)$ agrees with that of given by Frydman and Surya (2022) for the finite mixture of Markov jump processes.

\begin{corollary}\label{prop:mainprop2}
For any incomplete data $Y\in\mathcal{Y}$,
\begin{align*}
&\mathbb{E}\Big[-\frac{\partial^2 \log  f_c(X\vert\widehat{\theta}_n)}{\partial \theta^2}\Big\vert Y,\widehat{\theta}_n\Big] \nonumber \\
&\hspace{0.5cm}> \mathbb{E}\Big[ \Big(\frac{\partial \log f_c(X\vert \widehat{\theta}_n)}{\partial \theta}\Big)\Big(\frac{\partial \log f_c(X\vert\widehat{\theta}_n)}{\partial \theta}\Big)^{\top} \Big\vert Y, \widehat{\theta}_n\Big].
\end{align*}
\end{corollary}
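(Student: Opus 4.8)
The plan is to read the left-hand side as the conditional complete-data observed information $J_x(\widehat{\theta}_n)$ and to rewrite the right-hand side using the identity already proved in Proposition \ref{prop:proptheo2}, so that the inequality collapses onto a positive-definiteness statement that is available from Theorem \ref{theo:infomat}. Writing $S_c(\theta)=\frac{\partial \log f_c(X\vert\theta)}{\partial \theta}$, the right-hand side is the conditional second moment $\mathbb{E}\big[S_c(\widehat{\theta}_n)S_c(\widehat{\theta}_n)^{\top}\big\vert Y,\widehat{\theta}_n\big]$, and Proposition \ref{prop:proptheo2} expresses precisely this matrix as $J_{x\vert y}(\widehat{\theta}_n)$ plus the outer product $\mathbb{E}\big[S_c(\widehat{\theta}_n)\big\vert Y,\widehat{\theta}_n\big]\mathbb{E}\big[S_c(\widehat{\theta}_n)\big\vert Y,\widehat{\theta}_n\big]^{\top}$ of the conditional mean score. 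This is the step that converts the ``outer product of the complete-data score'' on the right into quantities we already control.

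Next I would bring in the Louis-type decomposition $J_x(\theta)=J_y(\theta)+J_{x\vert y}(\theta)$, which is obtained inside the proof of Theorem \ref{theo:infomat} by taking the conditional expectation $\mathbb{E}[\,\cdot\,\vert Y,\theta]$ of the twice-differentiated identity \eqref{eq:score2}. Substituting both this decomposition of $J_x$ and the Proposition \ref{prop:proptheo2} expansion of $\mathbb{E}[S_cS_c^{\top}\vert Y]$ into the target inequality, the shared term $J_{x\vert y}(\widehat{\theta}_n)$ cancels, and the claim reduces to showing $J_y(\widehat{\theta}_n) > \mathbb{E}\big[S_c(\widehat{\theta}_n)\big\vert Y,\widehat{\theta}_n\big]\mathbb{E}\big[S_c(\widehat{\theta}_n)\big\vert Y,\widehat{\theta}_n\big]^{\top}$. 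By Corollary \ref{cor:mainprop} the conditional mean score equals the observed-data score, $\mathbb{E}[S_c(\theta)\vert Y,\theta]=\frac{\partial \log f_o(Y\vert\theta)}{\partial \theta}$, so the residual term is exactly the rank-one outer product of the observed-data score evaluated at $\widehat{\theta}_n$.

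The final and most delicate step is to dispose of this observed-score outer product. At the maximizer $\widehat{\theta}_n$ the aggregated observed-data score vanishes by the first-order condition \eqref{eq:score}--\eqref{eq:solsol}, which kills the rank-one term and leaves precisely $J_y(\widehat{\theta}_n)$; the strict inequality then follows from $J_y(\widehat{\theta}_n) > 0$ established in Theorem \ref{theo:infomat}(ii). I expect this to be the main obstacle, and the subtle point of the whole statement: the reduction above leaves $J_y(\widehat{\theta}_n)$ minus a positive semidefinite rank-one matrix, which is not positive definite at an arbitrary $\theta$, so the cancellation must be driven by the score condition, which is exactly why the result is asserted at the estimator $\widehat{\theta}_n$ rather than at a generic parameter. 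I would therefore take care to phrase the argument at the aggregate (maximizer) level where the first-order condition genuinely forces the score term to vanish, and then invoke the strict positive definiteness of $J_y(\widehat{\theta}_n)$ to upgrade the conclusion from semidefinite to the claimed strict Loewner inequality.
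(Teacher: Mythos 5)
Your proof is correct and is essentially the derivation the paper intends: the corollary is stated there without a separate proof, as an immediate consequence of Proposition \ref{prop:proptheo2}, the decomposition $J_x=J_y+J_{x\vert y}$ from the proof of Theorem \ref{theo:infomat}, the identification $\mathbb{E}\big[\partial \log f_c(X\vert\theta)/\partial\theta\,\big\vert\, Y,\theta\big]=\partial \log f_o(Y\vert\theta)/\partial\theta$ of Corollary \ref{cor:mainprop}, and the positive definiteness $J_y(\widehat{\theta}_n)>0$ of Theorem \ref{theo:infomat}(ii). Your closing observation --- that the residual rank-one term $S_oS_o^{\top}$ is killed only by the first-order condition at the maximizer $\widehat{\theta}_n$, so the statement must be read at the level of the full sample $Y=\bigcup_{k=1}^n Y^k$ rather than at a generic single observation or a generic $\theta$ --- is exactly the step the paper leaves implicit, and you handle it correctly.
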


Corollary \ref{prop:mainprop2} extends the result on equivalence between unconditional variance of complete-data score function and the expected Fisher information, see Schervish (1995). 

\subsection{Recursive algorithms for $J_y^{-1}(\widehat{\theta}_n)$ and MLE $\widehat{\theta}_n$ }

This section discusses recursive estimations of the MLE $\widehat{\theta}_n$ and the finite-sample Cr\'amer-Rao lower bound $J_y^{-1}(\widehat{\theta}_n)$ for the covariance matrix of $\widehat{\theta}_n$.

\subsubsection{\textbf{Recursive calculation of $J_y^{-1}(\widehat{\theta}_n)$}}

Hero and Fessler (1994) proposed a recursive equation for the valuation of the Cram\'er-Rao lower bound $I_y^{-1}(\theta)$, the inverse of the expected Fisher information $I_y(\theta)$. The method uses the information matrix $I_y(\theta)$ and the inverse of the expected complete-data Fisher information $I_x(\theta)$. Thus, the method avoids taking the inverse of $I_y(\theta)$ which may be more difficult to invert than $I_x(\theta)$. However, their result is not immediately applicable for incomplete data in general since the expected Fisher information matrices $I_y(\theta)$ and $I_x(\theta)$ may not be available in closed form. 

To overcome this problem, we generalize their results for the inverse of observed Fisher information $J_y^{-1}(\widehat{\theta}_n)$ based on the conditional observed information matrices $J_y(\widehat{\theta}_n)$ and $J_x(\widehat{\theta}_n)$. The key to deriving the recursive equation for the inverse $J_y^{-1}(\widehat{\theta}_n)$ is the inequality (\ref{eq:infoloss2}).
\begin{theorem}\label{theo:theo4}
Let $\{\Psi_{\ell}\}_{\ell\geq 0}$ be a sequence of $(d\times d)-$matrices, with $d=\vert \theta\vert$, and $\Psi_0=0$ satisfying
\begin{align}\label{eq:itervar}
\Psi_{\ell+1}= A(\widehat{\theta}_n) \Psi_{\ell} + B(\widehat{\theta}_n),
\end{align}
for $A(\widehat{\theta}_n),B(\widehat{\theta}_n)\in\mathbb{R}^{d\times d}$. Then, $\{\Psi_{\ell}\}_{\ell\geq 1}$ converges with root of convergence $\rho(A(\widehat{\theta}_n))$ to $\Psi=J_y^{-1}(\widehat{\theta}_n)$ with $$A(\widehat{\theta}_n)=\big[I-J_x^{-1}(\widehat{\theta}_n)J_y(\widehat{\theta}_n)\big] \;\;\textrm{and}\;\; B(\widehat{\theta}_n)=J_x^{-1}(\widehat{\theta}_n).$$
Furthermore, the convergence is monotone in the sense $$\Psi_{\ell}< \Psi_{\ell+1}\leq \Psi \quad \textrm{for $\ell=0,1,\ldots$}$$
\end{theorem}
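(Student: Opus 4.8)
The plan is to read \eqref{eq:itervar} as a linear fixed-point iteration and to push the Loewner ordering \eqref{eq:infoloss2} of Theorem \ref{theo:infomat} through a single congruence transformation, which simultaneously controls the spectral radius and the monotonicity. Throughout I suppress the argument $\widehat{\theta}_n$ and write $J_x,J_y,A,B$ for the matrices in question.

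First I would identify the limit. A matrix $\Psi$ is fixed by \eqref{eq:itervar} iff $(I-A)\Psi=B$; since $I-A=J_x^{-1}J_y$ and $B=J_x^{-1}$, this reads $J_x^{-1}J_y\Psi=J_x^{-1}$, i.e. $J_y\Psi=I$. As $J_x,J_y>0$ are invertible, $I-A$ is invertible and the fixed point is the unique matrix $\Psi=J_y^{-1}$, as claimed. Next comes the key device: the symmetric positive-definite square root. Setting $C:=J_x^{-1/2}J_yJ_x^{-1/2}$, one has $A=I-J_x^{-1}J_y=J_x^{-1/2}(I-C)J_x^{1/2}$, so $A$ is similar to the symmetric matrix $I-C$ and, more usefully, $A^{\ell}B=J_x^{-1/2}(I-C)^{\ell}J_x^{-1/2}$ for every $\ell\ge0$. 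The inequality $J_x>J_y>0$ of \eqref{eq:infoloss2} transfers under congruence to $0<C<I$, so the eigenvalues of $C$—and hence of $I-C$—lie in $(0,1)$. This yields $\rho(A)=\rho(I-C)=1-\lambda_{\min}(C)<1$, whence the Neumann series converges and solving \eqref{eq:itervar} from $\Psi_0=0$ gives $\Psi_{\ell}=\sum_{j=0}^{\ell-1}A^jB\to(I-A)^{-1}B=J_y^{-1}=\Psi$ at the geometric rate $\rho(A)$.

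Monotonicity then falls out of the same representation, and this is where the congruence pays off. The increments telescope as $\Psi_{\ell+1}-\Psi_{\ell}=A^{\ell}B=J_x^{-1/2}(I-C)^{\ell}J_x^{-1/2}$, which is positive definite because $(I-C)^{\ell}>0$ and congruence preserves positive definiteness; hence $\Psi_{\ell}<\Psi_{\ell+1}$. For the upper bound I would write $\Psi-\Psi_{\ell+1}=\sum_{j\ge\ell+1}A^jB=J_x^{-1/2}\big(\sum_{j\ge\ell+1}(I-C)^{j}\big)J_x^{-1/2}$; the inner sum converges since $\rho(I-C)<1$ and is a sum of positive-definite matrices, so the whole expression is positive definite, giving $\Psi_{\ell+1}<\Psi$.

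The one genuine obstacle is that $A=I-J_x^{-1}J_y$ and the products $A^{\ell}B$ are \emph{not} symmetric, so neither $\rho(A)<1$ nor the Loewner-positivity of the increments is visible by inspection. The entire argument hinges on recognizing that conjugating by $J_x^{-1/2}$ at once symmetrizes $A$ and turns each $A^{\ell}B$ into the congruence $J_x^{-1/2}(I-C)^{\ell}J_x^{-1/2}$ of a manifestly positive-definite matrix. Once this single transformation is in place, both the convergence rate and the monotone Loewner ordering reduce to the elementary spectral fact $0<C<I$ supplied by Theorem \ref{theo:infomat}.
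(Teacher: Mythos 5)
Your proof is correct, and while it shares the paper's overall skeleton (identify the fixed point, expand $\Psi_\ell=\sum_{j<\ell}A^jB$ as a partial Neumann series, and show the increments $A^{\ell}B$ are positive definite), the key technical device is genuinely different. The paper establishes the two crucial facts --- that the eigenvalues of $A=I-J_x^{-1}J_y$ lie in $(0,1)$, and that each increment $A^{\ell}J_x^{-1}$ is symmetric positive definite --- by citing matrix-analysis results of Horn and Johnson (Theorem 7.2.1 for $0<J_x^{-1}J_y<I$, Corollary 1.3.4 for the eigenvalues, Corollary 5.6.16 for the Neumann series, and Corollary 7.7.4(a) for the Loewner positivity of the increments). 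You instead obtain both facts from the single congruence $C=J_x^{-1/2}J_yJ_x^{-1/2}$, which gives $A=J_x^{-1/2}(I-C)J_x^{1/2}$ and $A^{\ell}B=J_x^{-1/2}(I-C)^{\ell}J_x^{-1/2}$; then $0<C<I$ follows from $J_x>J_y>0$ by congruence, $\rho(A)=\rho(I-C)=1-\lambda_{\min}(C)<1$ by similarity, and positive definiteness of the increments is immediate. What your route buys is self-containedness and transparency: in particular, it makes visible the otherwise non-obvious fact that the non-symmetric products $A^{\ell}B$ are actually symmetric (so that the Loewner inequalities $\Psi_{\ell}<\Psi_{\ell+1}$ are even meaningful), it identifies the convergence rate explicitly as $1-\lambda_{\min}(C)$, and your tail-sum argument delivers the strict inequality $\Psi_{\ell+1}<\Psi$, slightly sharper than the $\Psi_{\ell+1}\leq\Psi$ asserted in the theorem. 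What the paper's route buys is brevity, at the cost of leaning on external citations for exactly the steps your congruence makes elementary.
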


\begin{proof} Since by Theorem \ref{theo:infomat}[(ii)] $J_y(\widehat{\theta}_n)$ and $J_x(\widehat{\theta}_n)$ are positive definite, it follows from (\ref{eq:infoloss}) and Theorem 7.2.1 on p. 438 of Horn and Johnson (2013) that $0< J_x^{-1}(\widehat{\theta}_n)J_y(\widehat{\theta}_n)<I$. Therefore, all eigenvalues of $I-J_x^{-1}(\widehat{\theta}_n)J_y(\widehat{\theta}_n)$ are positive and strictly less than one. See Corollary 1.3.4 in Horn and Johnson (2013). By Corollary 5.6.16 of Horn and Johnson (2013), it leads to
\begin{align*}
\Psi=&\big[I-J_x^{-1}(\widehat{\theta}_n)\big(J_x(\widehat{\theta}_n)-J_y(\widehat{\theta}_n)\big)\big]^{-1}J_x^{-1}(\widehat{\theta}_n)\\
=&\Big(\sum_{\ell=0}^{\infty} \big[I - J_x^{-1}(\widehat{\theta}_n)J_y(\widehat{\theta}_n)\big]^{\ell}\Big)J_x^{-1}(\widehat{\theta}_n).
\end{align*}
Since all eigenvalues of $I-J_x^{-1}(\widehat{\theta}_n)J_y(\widehat{\theta}_n)$ are positive and strictly less than one, we then obtain
\begin{align*}
\Psi_{\ell+1}- \Psi=&\big[I-J_x^{-1}(\widehat{\theta}_n)J_y(\widehat{\theta}_n)\big]\Psi_{\ell} + J_x^{-1}(\widehat{\theta}_n) -\Psi\\
=&\big[I-J_x^{-1}(\widehat{\theta}_n)J_y(\widehat{\theta}_n)\big]\big[\Psi_{\ell} -\Psi\big]\rightarrow 0,
\end{align*}
as $\ell \rightarrow \infty$, with root of convergence factor given by the maximum absolute eigenvalues of $I-J_x^{-1}(\widehat{\theta}_n)J_y(\widehat{\theta}_n)$. Thus, $\{\Psi_{\ell}\}_{\ell\geq 1}$ converges to $\Psi$ as $\ell\rightarrow \infty$. To show the convergence is monotone, recall that
\begin{align*}
\Psi_{\ell+1}-\Psi_{\ell}=&\big[I-J_x^{-1}(\widehat{\theta}_n)J_y(\widehat{\theta}_n)\big]\big[\Psi_{\ell}-\Psi_{\ell-1}\big]\\
=&\big[I-J_x^{-1}(\widehat{\theta}_n)J_y(\widehat{\theta}_n)\big]^{\ell}J_x^{-1}(\widehat{\theta}_n)>0,
\end{align*}
which is positive definite by Corollary 7.7.4(a) of Horn and Johnson (2013) and by the inequality (\ref{eq:infoloss2}). 
\end{proof}

\subsubsection{\textbf{Fisher scoring method for incomplete data}}

One of the widely used recursive scheme for parameter estimation is the Fisher scoring method, see among others Osborne (1992), Hastie et al. (2009) and Takai (2020). The scheme follows from (\ref{eq:pers1}) by replacing the true value $\theta^0$ by an estimate $\theta_{\ell}$ obtained after $\ell$ iterations and use $\theta_{\ell+1}$ in place of $\widehat{\theta}_n$. For the case of complete data the score vector $S_n(\widehat{\theta}_{\ell})$ and observed information matrix $J_y(\widehat{\theta}_{\ell})$ are defined by (\ref{eq:score}) and (\ref{eq:OFIM}), respectively. The main difficulty in dealing with incomplete data is that the information matrix $J_y(\theta)$ is difficult to evaluate. To overcome this difficulty, Takai (2020) proposed the use of lower-bound algorithm of Bohning and Linday (1988). However, although it eases the difficulty, the lower-bound does not actually correspond to the observed information matrix $J_y(\widehat{\theta}_{\ell})$. 

For this reason, we use the explicit form of information matrix $J_y(\theta)$ stated in Theorem \ref{theo:infomat}. The scheme reads
\begin{align}\label{eq:NRAlgo}
\widehat{\theta}_{\ell+1} =\widehat{\theta}_{\ell} + J_y^{-1}(\widehat{\theta}_{\ell})S_n(\widehat{\theta}_{\ell}),
\end{align}
provided the matrix $J_y(\theta)$ is invertible for any $\theta\in \Theta$. Taking account of the inequality (\ref{eq:infoloss}) and the identity (\ref{eq:id10}), the above Fisher scoring method can be improved using (\ref{eq:EMLEb}) by replacing $\theta^0$ by an estimate $\widehat{\theta}_{\ell}$ obtained after $\ell-$iteration and use $\widehat{\theta}_{\ell+1}$ in place of $\widehat{\theta}_n^0$. The recursive scheme of the estimation is discussed in the section below. 

\subsubsection{\textbf{EM-Gradient algorithm}}
The recursive equation corresponds to the \textit{EM-Gradient algorithm} proposed by Lange (1995). The recursion provides the fastest Newton-Raphson algorithm for solving the M-step iteratively which has quadratic convergence compared to the linear convergence in the EM algorithm. See Wu (1983) for the convergence properties of the EM algorithm. The EM-Gradient algorithm \eqref{eq:NRAlgo} yields an estimate $\widehat{\theta}_{\ell}$ for $\theta^0$ which serves as the lower bound to that of given by the incomplete-data Fisher scoring method, see for e.g. Osborne (1992), McLachlan and Krishnan (2008) and Takai (2020), whose estimated standard error is given by the inverse observed Fisher information matrix $J_y^{-1}(\widehat{\theta}_{\ell})$. 

\subsection*{The EM-Gradient algorithm}

\begin{enumerate}

\item[(i)][\textbf{Initial step}] Set an initial value $\widehat{\theta}_0$ of $\theta^0$. 

\item[(ii)] \textbf{E-step}, after $\ell-$th iteration, evaluate using the current estimate $\widehat{\theta}_{\ell}$ the conditional expectations 
\begin{align*}
S_n( \widehat{\theta}_{\ell}):=&\frac{1}{n}\sum_{k=1}^n\mathbb{E}\Big[ \frac{\partial \log f_c\big(X^k\vert \widehat{\theta}_{\ell}\big)}{\partial \theta}\Big\vert Y^k, \widehat{\theta}_{\ell}\Big],\\
J_x( \widehat{\theta}_{\ell}):=&\frac{1}{n}\sum_{k=1}^n\mathbb{E}\Big[-\frac{\partial^2 \log f_c\big(X^k\vert \widehat{\theta}_{\ell}\big)}{\partial \theta^2}\Big\vert Y^k,\widehat{\theta}_{\ell}\Big].
\end{align*}

\item[(iii)] \textbf{M-step}, get an update $\widehat{\theta}_{\ell+1}$ using the identity 
\begin{align}\label{eq:NRAlgo2}
\widehat{\theta}_{\ell+1}=\widehat{\theta}_{\ell} + J_x^{-1}(\widehat{\theta}_{\ell}) S_n(\widehat{\theta}_{\ell}).
\end{align}

\item[(iv)] \textbf{Stop} if $\Vert \widehat{\theta}_{\ell +1} - \widehat{\theta}_{\ell} \vert\vert <\varepsilon$, with $\varepsilon>0$. Otherwise, go back to (ii) and replace $\widehat{\theta}_{\ell}$ by $\widehat{\theta}_{\ell+1}$.
\end{enumerate}

\begin{Rem}\label{rem:rem4}
Thus, if the convergence criterion $\Vert \widehat{\theta}_{\ell+1} - \widehat{\theta}_{\ell} \vert\vert <\varepsilon$ is reached, it follows from (\ref{eq:NRAlgo}) and (\ref{eq:NRAlgo2}) that at its convergence the recursive estimator $\widehat{\theta}_{\infty}$ corresponds to the MLE $\widehat{\theta}_n$ since $\widehat{\theta}_{\infty}$ solves the equation $S_n(\widehat{\theta}_{\infty})=0.$
\end{Rem}

\subsection{Sandwich estimator of covariance matrix and asymptotic properties of the M-estimator $\widehat{\theta}_n^0$}
The $\sqrt{n}-$limiting normal distribution of the M-estimator $\widehat{\theta}_n^0$ is expressed in terms of the sandwich estimator.
\begin{Def}[\textbf{Finite-sample sandwich estimator}]
\begin{align}\label{eq:sandwich}
\Sigma_n(\widehat{\theta}_n)=J_x^{-1}(\widehat{\theta}_n)J_y(\widehat{\theta}_n)J_x^{-1}(\widehat{\theta}_n), 
\end{align}
defines a finite-sample consistent efficient sandwich estimator of covariance matrix for the MLE $\widehat{\theta}_n$.
\end{Def}
Note that \eqref{eq:sandwich} is slightly different from Huber sandwich estimator $V_n(\widehat{\theta})=J_y^{-1}(\widehat{\theta})K_n(\widehat{\theta})J_y^{-1}(\widehat{\theta})$, with the matrix $K_n(\theta)$ defined by $\frac{1}{n}\sum_{k=1}^n \Big(\frac{\partial \log f_o(Y^k\vert \theta)}{\partial \theta}\Big)\Big( \frac{\partial \log f_o(Y^k\vert \theta)}{\partial \theta}\Big)^{\top},$ for model misspecification under incomplete data. See e.g. Huber (1967), Freedman (2006), and Little and Rubin (2020). In contrast to the Huber sandwich estimator, \eqref{eq:sandwich} does not involve the inverse of $J_y(\theta)$ which in general may be more difficult to evaluate than $J_x(\theta)$. 
\begin{theorem}[\textbf{$\sqrt{n}-$consistent limiting normal distribution of the M-estimator $\widehat{\theta}_n^0$}]\label{theo:maintheo}
By Assumption (\ref{eq:bounded}),
\begin{align*}
\sqrt{n}\big(\widehat{\theta}_n^0-\theta^0\big) \stackrel{d}{\Longrightarrow} N(0,\Sigma(\theta^0)),
\end{align*}
with the asymptotic covariance matrix $\Sigma(\theta^0)$ defined by
\begin{align*}
\Sigma(\theta^0)=I_x^{-1}(\theta^0)I_y(\theta^0)I_x^{-1}(\theta^0).
\end{align*}
\end{theorem}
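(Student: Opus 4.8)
The plan is to start from the asymptotic linearization \eqref{eq:EMLEb}, which after rearranging reads
\begin{align*}
\sqrt{n}\big(\widehat{\theta}_n^0-\theta^0\big)=J_x^{-1}(\theta^0)\,\sqrt{n}\,\mathscr{S}_n(\theta^0).
\end{align*}
First I would invoke Corollary \ref{cor:mainprop}: evaluating the identity \eqref{eq:id10} at $\theta=\theta^0$ for each generic sample gives $\mathbb{E}[\partial \log f_c(X^k\vert\theta^0)/\partial\theta\vert Y^k,\theta^0]=\partial \log f_o(Y^k\vert\theta^0)/\partial\theta$, so that $\mathscr{S}_n(\theta^0)=S_n(\theta^0)$ exactly. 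This collapses the score of the M-criterion onto the observed-data score, for which the limit theory is already available in Section \ref{sec:sec22}.

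With this identification the limit theory factorizes, and the proof reduces to handling the two factors separately. The observed-data score already obeys, by independence of $\{Y^k\}$, the regularity \eqref{eq:bounded} and the Central Limit Theorem, the convergence $\sqrt{n}\,S_n(\theta^0)\stackrel{d}{\Longrightarrow}N(0,I_y(\theta^0))$ recorded in Section \ref{sec:sec22}, where $I_y(\theta^0)$ is simultaneously the expected Fisher information and the variance of the generic score. For the matrix factor, $J_x(\theta^0)$ is the average of the i.i.d.\ conditional matrices $\mathbb{E}[-\partial^2\log f_c(X^k\vert\theta^0)/\partial\theta^2\vert Y^k,\theta^0]$, so the Law of Large Numbers gives $J_x(\theta^0)\stackrel{\mathbb{P}}{\Longrightarrow}I_x(\theta^0)$. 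Since Theorem \ref{theo:infomat}(ii) ensures $J_x$ is positive definite and \eqref{eq:posdef} gives $I_x(\theta^0)>0$, matrix inversion is continuous at the limit, and the continuous mapping theorem delivers $J_x^{-1}(\theta^0)\stackrel{\mathbb{P}}{\Longrightarrow}I_x^{-1}(\theta^0)$.

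Finally I would combine the two by Slutsky's lemma: a matrix converging in probability to the constant $I_x^{-1}(\theta^0)$ times a vector converging in distribution to $N(0,I_y(\theta^0))$ converges in distribution to $I_x^{-1}(\theta^0)\,N(0,I_y(\theta^0))$. Because $I_x^{-1}(\theta^0)$ is symmetric, the affine-transformation rule for Gaussians yields the limit $N\big(0,\,I_x^{-1}(\theta^0)I_y(\theta^0)I_x^{-1}(\theta^0)\big)$, which is precisely $\Sigma(\theta^0)$.

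I anticipate the main obstacle to lie not in the Slutsky/CLT mechanics but in the rigorous justification of the starting point \eqref{eq:EMLEb} itself. Strictly, a mean-value expansion of $0=\mathscr{S}_n(\widehat{\theta}_n^0)$ about $\theta^0$ produces the Jacobian evaluated at an intermediate point $\tilde{\theta}_n$ lying between $\widehat{\theta}_n^0$ and $\theta^0$, rather than at $\theta^0$. Controlling this requires the consistency $\widehat{\theta}_n^0\stackrel{\mathbb{P}}{\Longrightarrow}\theta^0$ established earlier together with the continuity in $\theta$ of the second-derivative conditional expectations guaranteed by Assumption \ref{ass:A12}, so that $J_x(\tilde{\theta}_n)\stackrel{\mathbb{P}}{\Longrightarrow}I_x(\theta^0)$ as well; the remainder is then asymptotically negligible and the displayed linearization holds up to $o_{\mathbb{P}}(1/\sqrt{n})$, leaving the limiting distribution unaffected.
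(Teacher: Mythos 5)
Your proposal is correct and follows essentially the same route as the paper's proof: starting from the linearization \eqref{eq:EMLEb}, identifying $\mathscr{S}_n(\theta^0)=S_n(\theta^0)$ via the identity \eqref{eq:id10}, invoking $\sqrt{n}\,S_n(\theta^0)\stackrel{d}{\Longrightarrow}N(0,I_y(\theta^0))$, and combining with Slutsky's lemma to obtain $\Sigma(\theta^0)=I_x^{-1}(\theta^0)I_y(\theta^0)I_x^{-1}(\theta^0)$. Your final paragraph on the mean-value expansion at an intermediate point $\tilde{\theta}_n$ actually supplies a justification of \eqref{eq:EMLEb} that the paper leaves implicit, so your write-up is, if anything, more complete than the original.
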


\begin{proof} As a consistent estimator of $\theta^0$, the asymptotic normality of the M-estimator $\widehat{\theta}_n^0$ follows from (\ref{eq:EMLE})-(\ref{eq:EMLEb}) and equality between $\mathscr{S}_n(\theta^0)$ (\ref{eq:solsol}) and $S_n(\theta^0)$ (\ref{eq:score}). Since $\sqrt{n}S_n(\theta^0)\sim N(0, I_y(\theta^0))$, application of Slutsky's lemma and CLT gives the limiting distribution of $\widehat{\theta}_n^0$. 
\end{proof}

By Assumption (\ref{eq:bounded}) and consistency of the MLE $\widehat{\theta}_n$, the asymptotic covariance $\Sigma(\theta^0)$ can be consistently estimated by the sandwich estimator $\Sigma_n(\widehat{\theta}_n)$ (\ref{eq:sandwich}). Furthermore, positive definiteness of $J_y(\widehat{\theta}_n)$ and $J_x(\widehat{\theta}_n)$, hence is invertible (see Theorem 7.2.1 on p. 438 of Horn and Johnson (2013)), implies following (\ref{eq:infoloss}) that $\Sigma_n(\widehat{\theta}_n)$ produces smaller standard errors of the M-estimator $\widehat{\theta}_n^0$ than that of the MLE $\widehat{\theta}_n$ given by the inverse observed Fisher information $J_y^{-1}(\widehat{\theta}_n),$ which corresponds to the Cram\'er-Rao lower bound for the covariance matrix of the MLE. 

\begin{proposition}\label{prop:prop4}
The sandwich estimator $\Sigma_n(\widehat{\theta}_n)$ satisfies the Loewner partial matrix ordering 
\begin{align}\label{eq:ordering2}
J_y^{-1}(\widehat{\theta}_n) > J_x^{-1}(\widehat{\theta}_n)> \Sigma_n(\widehat{\theta}_n)>0.
\end{align}
\end{proposition}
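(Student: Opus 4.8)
The plan is to prove the three-term chain \eqref{eq:ordering2} by peeling off one inequality at a time, using only congruence transformations together with the operator-antitone property of matrix inversion. Throughout I suppress the argument and write $J_x=J_x(\widehat{\theta}_n)$, $J_y=J_y(\widehat{\theta}_n)$; by Theorem \ref{theo:infomat}[(ii)] these are symmetric positive definite with $J_x>J_y>0$, so their inverses $J_x^{-1},J_y^{-1}$ are symmetric positive definite as well, and in particular $J_x^{-1}$ is invertible.

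First I would dispatch the rightmost inequality $\Sigma_n(\widehat{\theta}_n)>0$. Since $\Sigma_n=J_x^{-1}J_yJ_x^{-1}=(J_x^{-1})^{\top}J_y\,J_x^{-1}$ is a congruence of the positive definite matrix $J_y$ by the invertible symmetric matrix $J_x^{-1}$, it is itself symmetric positive definite: for any $0\neq z\in\mathbb{R}^d$ we have $z^{\top}\Sigma_n z=(J_x^{-1}z)^{\top}J_y(J_x^{-1}z)>0$, because $J_x^{-1}z\neq 0$ and $J_y>0$. For the middle inequality $J_x^{-1}>\Sigma_n$ I would compute the difference directly,
\begin{align*}
J_x^{-1}-\Sigma_n=J_x^{-1}-J_x^{-1}J_yJ_x^{-1}=J_x^{-1}\big(J_x-J_y\big)J_x^{-1},
\end{align*}
which is again a congruence of $J_x-J_y$ by $J_x^{-1}$. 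Since $J_x-J_y>0$ by Theorem \ref{theo:infomat}[(ii)], the same argument shows $J_x^{-1}-\Sigma_n>0$.

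The leftmost inequality $J_y^{-1}>J_x^{-1}$ is the only step that is not a pure congruence; it is the antitonicity of inversion on the positive-definite cone. Because $J_x>J_y>0$, Corollary 7.7.4(a) of Horn and Johnson (2013) — already invoked in the proof of Theorem \ref{theo:theo4} — gives $J_y^{-1}>J_x^{-1}$ at once. Chaining the three relations yields \eqref{eq:ordering2}.

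The main obstacle, such as it is, lies in this reversal-of-order step, since unlike the two congruence arguments it rests on the non-elementary operator monotonicity of the map $M\mapsto -M^{-1}$. If a self-contained derivation is preferred over citing Horn and Johnson, I would argue via simultaneous diagonalisation: $J_x>J_y>0$ is equivalent to $J_y^{-1/2}J_xJ_y^{-1/2}>I$, so all eigenvalues of this symmetric matrix exceed one, whence its inverse $J_y^{1/2}J_x^{-1}J_y^{1/2}$ has all eigenvalues strictly below one, i.e.\ $J_y^{1/2}J_x^{-1}J_y^{1/2}<I$; a further congruence by $J_y^{-1/2}$ then delivers $J_x^{-1}<J_y^{-1}$ with strict ordering intact. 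Either route establishes the first inequality and completes the proof.
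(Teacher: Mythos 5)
Your proof is correct, and it follows the same route the paper has in mind: the paper states Proposition \ref{prop:prop4} without a displayed proof, relying on the preceding paragraph, which invokes positive definiteness of $J_x(\widehat{\theta}_n)$ and $J_y(\widehat{\theta}_n)$ (Theorem \ref{theo:infomat}[(ii)]), invertibility via Theorem 7.2.1 of Horn and Johnson (2013), and the ordering (\ref{eq:infoloss}) — exactly the ingredients you use. What you add beyond the paper's sketch is the actual verification of the two right-hand inequalities: the congruence identity $J_x^{-1}-\Sigma_n=J_x^{-1}\bigl(J_x-J_y\bigr)J_x^{-1}$, which reduces the middle inequality to $J_x>J_y$, and the observation that $\Sigma_n$ itself is a congruence of $J_y$, hence positive definite. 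These steps are left implicit in the paper, and your leftmost inequality $J_y^{-1}>J_x^{-1}$ rests on the same Corollary 7.7.4(a) of Horn and Johnson that the paper already cites in proving Theorem \ref{theo:theo4}; your alternative simultaneous-diagonalisation argument is a serviceable self-contained substitute. In short, the proposal is a completed version of the argument the paper only outlines, with no gaps.
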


\subsection{Improved estimation by repeated sampling }\label{sec:improvedmle}

The results of Theorem \ref{theo:maintheo} and Proposition \ref{prop:prop4} shows that the M-estimator $\widehat{\theta}_n^0$ (\ref{eq:EMLEb}) provides a superior estimate of the true value $\theta^0$ as it attains standard error smaller than that of the MLE $\widehat{\theta}_n$ (\ref{eq:pers1}) given by the inverse $J_y^{-1}(\widehat{\theta}_n)$ of the observed Fisher information $J_y(\widehat{\theta}_n)$; even smaller than $J_x^{-1}(\widehat{\theta}_n)$. However, the M-estimator $\widehat{\theta}_n^0$ involves $\theta^0$. 

To construct $\widehat{\theta}_n^0$, suppose that it is possible to draw (independently) $K$ samples $Y_k=\{Y_k^{\ell}:\ell=1,\ldots,n\}$ of size $n$ each. For this purpose, one may use the Bootstrapping method. From each $k-$th subsample $Y_k$ an MLE is derived by applying iterative Fisher scoring method (\ref{eq:NRAlgo}), the EM-Gradient algorithm (\ref{eq:NRAlgo2}) or the EM algorithm, which results in $K$ independent sets of MLEs $\{\widehat{\theta}_n^{(\ell)}:\ell=1,\ldots,K\}$. Define $\widehat{\theta}:=\frac{1}{K}\sum_{\ell=1}^K \widehat{\theta}_n^{(\ell)}$. Since $\sqrt{n}(\widehat{\theta}_n^{(\ell)}-\theta^0)\sim N(0,I_y^{-1}(\theta^0))$, it follows that $\widehat{\theta}\Longrightarrow \theta^0$ as $K$ increases where the convergence occurs with probability one. Then, apply (\ref{eq:EMLEb}) to each sample $\{Y_{k}\}$ to obtain
\begin{align}\label{eq:bootstraping}
\widehat{\theta}_{nk}^0=\widehat{\theta} + J_{xk}^{-1}(\widehat{\theta}) S_{nk}(\widehat{\theta}),
\end{align}
where $S_{nk}(\theta)$ and $J_{xk}(\theta)$ are defined respectively by
\begin{align*}
S_{nk}(\theta)=&\frac{1}{n}\sum_{\ell=1}^n \frac{\partial \log f_o(Y_{k}^{\ell}\vert \theta)}{\partial \theta},\\
J_{xk}(\theta)=&-\frac{1}{n}\sum_{\ell=1}^n \mathbb{E}\Big[\frac{\partial^2 \log f_c(X_{k}^{\ell}\vert \theta)}{\partial \theta^2}\Big\vert Y_k^{\ell},\theta\Big],
\end{align*}
whilst $X_{k}^{\ell}$ is the complete observation of $Y_{k}^{\ell}$. It is worth mentioning that although $S_{nk}(\widehat{\theta}_n^{(\ell)})=0$, $\ell\in \{1,\ldots,K\}$, $S_{nk}(\widehat{\theta}) \neq 0$. Thus, by applying Lemma 2.8 and Theorem 2.7 (iii) in Van der Vaart (2000), it follows from (\ref{eq:bootstraping}) that
\begin{align}\label{eq:sqrtndist}
\sqrt{n}(\widehat{\theta}_{nk}^0-\theta^0)\sim N\big(0, \Sigma(\theta^0)\big).
\end{align}
For an estimate of $\Sigma(\theta^0)$, we use the sandwich estimator: $$\Sigma_n(\widehat{\theta})=\overline{J}_x^{-1}(\widehat{\theta})\overline{J}_y(\widehat{\theta})\overline{J}_x^{-1}(\widehat{\theta}),$$where $\overline{J}_x(\widehat{\theta})$ and $\overline{J}_y(\widehat{\theta})$ are defined respectively by
\begin{align*}
\overline{J}_x(\widehat{\theta})=\frac{1}{K}\sum_{k=1}^K J_{xk}(\widehat{\theta})\quad \textrm{and} \quad
\overline{J}_y(\widehat{\theta})=&\frac{1}{K}\sum_{k=1}^K J_{yk}(\widehat{\theta}),
\end{align*}
with $J_{yk}(\widehat{\theta})=-\frac{1}{n}\sum_{\ell=1}^n \frac{\partial^2 \log f_o(Y_{k}^{\ell}\vert \widehat{\theta})}{\partial \theta^2}$ calculated using the result of Theorem \ref{theo:infomat}. Implementation of this estimation method is discussed in more details in Section \ref{sec:sec5}.

\begin{Rem}
Notice that the arguments used above are not applicable in order to improve the standard errors of the MLE $\widehat{\theta}_n$ when applied to (\ref{eq:pers1}) to derive the estimator $$\widehat{\theta}_{nk}=\widehat{\theta} + J_{yk}^{-1}(\widehat{\theta}) S_{nk}(\widehat{\theta}),$$ with $J_{yk}(\theta)=-\frac{1}{n}\sum_{\ell=1}^n \frac{\partial^2 \log f_o(Y_{k}^{\ell}\vert \theta)}{\partial \theta^2}$. This is because $\sqrt{n}(\widehat{\theta}_{nk}-\theta^0)\sim N\big(0,I_y^{-1}(\theta^0)\big)$. However, in the case of complete information or in the absence of repeated sampling ($K=1$), the M-estimator $\widehat{\theta}_{nk}^0$ and the MLE $\widehat{\theta}_{nk}$ coincide, with estimated covariance matrix $J_y^{-1}(\widehat{\theta})$.
\end{Rem}

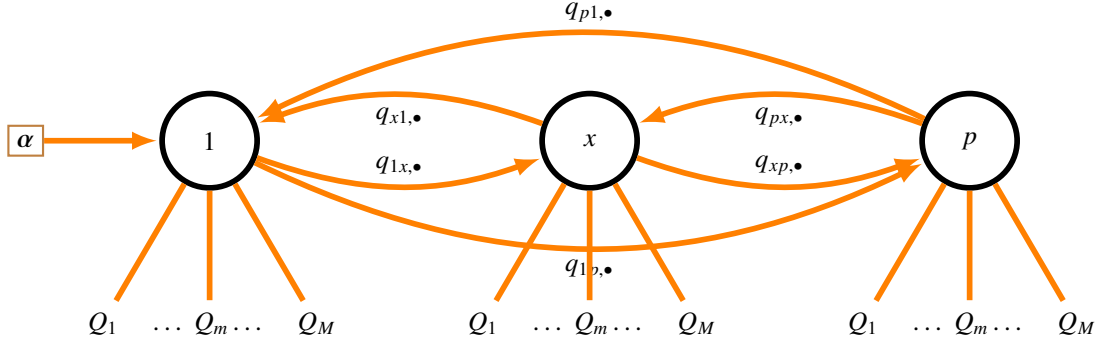
\begin{figure*}[t!]
\begin{center}
  \begin{tikzpicture}[font=\sffamily]

        \tikzset{node style/.style={state,
                                    minimum width=1.25cm,
                                    line width=0.75mm,
                                    fill=white!20!white}}

          \tikzset{My Rectangle1/.style={rectangle, draw=brown, fill=white, thick,
    prefix after command= {\pgfextra{\tikzset{every label/.style={blue}}, label=below}}
    }
}

          \tikzset{My Rectangle2/.style={rectangle,draw=brown,  fill=yellow, thick,
    prefix after command= {\pgfextra{\tikzset{every label/.style={blue}}, label=below}}
    }
}

          \tikzset{My Rectangle3/.style={rectangle, draw=brown, fill=white, thick,
    prefix after command= {\pgfextra{\tikzset{every label/.style={blue}}, label=below}}
    }
}

        \node[node style] at (2, 0)     (s1)     {$1$};
        \node[node style] at (7, 0)     (s2)     {$x$};
         \node[node style] at (12, 0)    (s3)     {$p$};

      \node [My Rectangle3, label={}] at  ([shift={(-5em,0em)}]s1.west) (p0) {$\boldsymbol{\alpha}$};

        \node [draw=none, label={} ] at  ([shift={(4em,-5em)}]s1.south) (g1) {$Q_M$};
         \node [draw=none, label={} ] at  ([shift={(0em,-5em)}]s1.south) (g10) {$\dots\; Q_m\dots$};
          \node [draw=none, label={} ] at  ([shift={(-4em,-5em)}]s1.south)  (g2) {$Q_1$};

         \node [draw=none, label={} ] at  ([shift={(4em,-5em)}]s2.south) (g3) {$Q_M$};
          \node [draw=none, label={} ] at  ([shift={(0em,-5em)}]s2.south) (g30) {$\dots\; Q_m\dots$};
          \node [draw=none, label={} ] at  ([shift={(-4em,-5em)}]s2.south)  (g4) {$Q_1$};

         \node [draw=none, label={}, auto=right ] at  ([shift={(4em,-5em)}]s3.south) (g5) {$Q_M$};
          \node [draw=none, label={} ] at  ([shift={(0em,-5em)}]s3.south) (g50) {$\dots\; Q_m\dots$};
          \node [draw=none, label={} ] at  ([shift={(-4em,-5em)}]s3.south)  (g6) {$Q_1$};

        \draw[every loop,
              auto=right,
              line width=0.75mm,
              >=latex,
              draw=orange,
              fill=orange]

            (s1)  edge[bend right=20, auto=left] node {$q_{1x,\bullet}$} (s2)

            (s2)  edge[bend right=20, auto=left]                    node {$q_{x1,\bullet}$}  (s1)

             (s3)  edge[bend right=20, auto=left]                    node {$q_{px,\bullet}$}  (s2)
             (s2)  edge[bend right=20, auto=left]                node {$q_{xp,\bullet}$}  (s3)

             (s3)  edge[bend right=26, auto=right]                node {$q_{p1,\bullet}$}  (s1)
             (s1)  edge[bend right=26, auto=right]                node {$q_{1p,\bullet}$}  (s3)

            (s1) edge [-,auto=left]  node {} (g1)
            (s1) edge [-,auto=left]  node {} (g10)
            (s1) edge [-,auto=right] node {} (g2)

            (s2) edge [-,auto=left] node {} (g3)
             (s2) edge [-,auto=left]  node {} (g30)
            (s2) edge [-,auto=right]  node {} (g4)

            (s3) edge [-,auto=left] node {} (g5)
             (s3) edge [-,auto=left]  node {} (g50)
            (s3) edge [-,auto=right] node {} (g6)

            (p0) edge node {} (s1);

 \end{tikzpicture}
 \caption{State diagram of RSCMJP process with $M$ speed regimes $\{Q_m\}$.}\label{fig:mixture}
\end{center}
\end{figure*}

\section{Regime-switching conditional Markov jump process}\label{sec:sec4}

To exemplify the results of Section \ref{sec:main}, we consider maximum likelihood estimation of the distribution parameters of a regime-switching conditional Markov jump process (RSCMJP) $X=\{X_t:t\geq 0\}$ introduced recently in Surya (2022a). RSCMJP is a complex stochastic model which can be used to describe a sequence of events where the occurrence of an event depends not only on the current state, but also on the current time and past observations of the process. It may be considered as a nontrivial generalization of the Markov jump process (see, e.g., Norris, 2009) and has distributional equivalent stochastic representation with a finite mixture of Markov jump processes proposed in Frydman and Surya (2022). See Surya (2018,2022b) for distributional properties of the finite mixture of Markov jump processes. The RSCMJP model allows the process to switch the transition rates from a finite number of transition matrices $Q_m=(q_{xy,m}:x,y\in\mathbb{S})$, $m=1,\ldots,M$ when it moves from any phase $x$ of the state space $\mathbb{S}=\{1,\ldots,p\}$, $p\in\mathbb{N}$, to another state $y\in\mathbb{S}$ with switching probability depending on the current state, time and its past information. The latter summarizes observable quantities of $X$ concerning the number of transitions $N_{xy}$ between states $(x,y)\in\mathbb{S}$, occupation time $T_x$ in each state $x\in\mathbb{S}$, and initial state indicator $B_x$ having value one if $X_0=x$, or zero otherwise. 

Figure \ref{fig:mixture} depicts the transition diagram of a RSCMJP. Beside the transition matrices $\{Q_m: m=1,\ldots,M\}$, the distribution of $X$ is characterized by an initial probability $\boldsymbol{\alpha}$ with $\alpha_x=\mathbb{P}\{X_0=x\}$ satisfying $\sum_{x=1}^p \alpha_x=1$, and regime-switching probability $\phi_{x,m}=\mathbb{P}\{X_0=X_0^{(m)}\vert X_0= x\}$ which is the probability of making an initial transition w.r.t a Markov process $X^{(m)}=\{X_t^{(m)}:t\geq 0\}$, with transition matrix $Q_m$, starting from a state $x\in\mathbb{S}$. As the underlying Markov processes $(X^{(m)}, Q_m)$, $m=1,\ldots,M$ are defined on the same state space $\mathbb{S}$, there is a hidden information $\Phi$ regarding which underlying Markov process that drives the movement of $X$ when it makes a jump from one state to another. The random variable $\Phi$ has a categorical distribution with $\mathbb{P}\{\Phi=m\}=p_m$ and $\sum_{m=1}^M p_m=1$. The pair $(X^k,\Phi^k)$ accounts for complete observation of kth paths. Define $\Phi_{k,m}=\mathbbm{1}_{\{\Phi^k=m\}}$. Note that $\sum_{m=1}^M \Phi_{k,m}=1$ for each sample path $X^k$. Consider $n$ independent paths $\{X^k:k=1,\ldots,n\}$ of $X$ (generated data or real dataset). See Surya (2022a) for more detailed algorithm on generating sample paths of RSCMJP. As distributional equivalent stochastic representation of the finite mixture of Markov jump processes, see Frydman and Surya (2022), the complete-data log-likelihood function of $(X^k,\Phi^k)$ is given by
\begin{align}\label{eq:likelihood}
\log f_c(X^k,\Phi^k\big\vert\theta)=&\sum_{m=1}^M \sum_{x=1}^p \Phi_{k,m}B_{x}^k \log \phi_{x,m} \nonumber \\&\hspace{-2.25cm}+\sum_{m=1}^M \sum_{x=1}^p\sum_{y\neq x, y=1}^p \Phi_{k,m}\Big[N_{xy}^k \log q_{xy,m} - q_{xy,m} T_{x}^k\Big],
\end{align}
where $\{\phi_{x,m}:x\in\mathbb{S}, 1\leq m\leq M\}$ are subject to the constraint $\sum_{m=1}^M \phi_{x,m}=1$ for each $x\in\mathbb{S}$. Note that $N_{xy}^k$, $T_x^k$ and $B_x^k$ are observable quantities of the $k-$th sample path $X^k$. It follows from the above log-likelihood that the proportion $\alpha_x$ can be estimated separately by $\widehat{\alpha}_x=\overline{B}_x/n$, with $\overline{B}_x:=\sum_{k=1}^n B_x^k$. Since $\widehat{\alpha}_x$ does not involve $\Phi$, it is therefore excluded from the estimation of the other parameters $\theta^0=(\phi_{x,m}^0, q_{xy,m}^0: x,y\in\mathbb{S}, 1\leq m\leq M)$. 

\subsection{\textbf{The observed information matrix $J_x(\theta)$}}

To derive the elements of information matrices $J_x(\theta)$ and $J_y(\theta)$, define $\widehat{B}_{x,m}(\theta)=\sum_{k=1}^n \widehat{\Phi}_{k,m}(\theta) B_x^k$, where  
\begin{align*}
\widehat{\Phi}_{k,m}(\theta)=\frac{f_c(X^k,\Phi^k=m \big\vert\theta)}{\sum_{m=1}^M f_c(X^k,\Phi^k=m\big\vert \theta)}.
\end{align*}
Similarly defined for $\widehat{N}_{xy,m}(\theta)$ and $\widehat{T}_{x,m}(\theta)$. Following the complete-data log-likelihood (\ref{eq:likelihood}), we obtain
\begin{align*}
&J_x(\phi_{x,m},\phi_{y,\ell})=\frac{1}{n}\sum_{k=1}^n \mathbb{E}\Big[-\frac{\partial^2 \log f_c (X^k,\Phi^k\vert \theta)}{\partial \phi_{x,m}\partial \phi_{y,\ell}}\Big\vert X^k,\theta\Big]\\
&\hspace{1cm}=\left(\frac{\widehat{B}_{y,\ell}(\theta)}{n\phi_{y,\ell}^2}\delta_m(\ell) + \frac{\widehat{B}_{y,M}(\theta)}{n\phi_{y,M}^2}\right)\delta_x(y),
\end{align*}
with $\delta_x(y)=1$ if $y=x$, or zero otherwise. Furthermore,
\begin{align*}
J_x(q_{xy,m},q_{rv,\ell})=\frac{\delta_x(r)\delta_y(v)\delta_m(\ell)}{nq_{xy,m}q_{rv,\ell}}\widehat{N}_{rv,\ell}(\theta),
\end{align*}
and
\begin{align*}
J_x(\phi_{x,m},q_{rv,\ell})=0.
\end{align*}
It follows from the matrix structure that $J_x(\theta)$ is of block diagonal and positive definite, hence is invertible. To be more precise, notice that the submatrix $J_x(\phi_{x,m},\phi_{y,\ell})$ is of block diagonal with $J_x(\phi_{x,m},\phi_{y,\ell})=0$ for $y\neq x$ and
\begin{align*}
J_x(\phi_{x,\ell},\phi_{x,m})=
\begin{cases}
\frac{\widehat{B}_{x,\ell}(\theta)}{n \phi_{x,\ell}^2} + \frac{\widehat{B}_{x,M}(\theta)}{n \phi_{x,M}^2}, & \ell=m,\\[10pt]
\frac{\widehat{B}_{x,M}(\theta)}{n \phi_{x,M}^2}, & \ell\neq m,
\end{cases}
\end{align*}
for $\ell,m=1,\ldots,M-1$. Thus, the $(M-1)\times (M-1)-$matrix $[J_x(\phi_x)]_{\ell,m}=J_x(\phi_{x,\ell},\phi_{x,m})$ reads as
\begin{align*}
J_x(\phi_x)=D_n(\phi_x) + \beta_{x,M} \mathbbm{1}\mathbbm{1}^{\top},
\end{align*}
where $D_n(\phi_x)$ is a diagonal matrix with $[D_n(\phi_x)]_{\ell,\ell}=d_{x,\ell}:=\widehat{B}_{x,\ell}(\theta)/(n \phi_{x,\ell}^2)$ and $\beta_{x,M}=\widehat{B}_{x,M}(\theta)/(n \phi_{x,M}^2)$, whilst $\mathbbm{1}$ is a $(M-1)-$vector of one. Using the Sherman and Morrison (1950) formula, the inverse of $J_x(\phi_x)$ is
\begin{align*}
J_x^{-1}(\phi_x)=D_n^{-1}(\phi_x) - \frac{\beta_{x,M} D_n^{-1}(\phi_x)\mathbbm{1}\mathbbm{1}^{\top} D_n^{-1}(\phi_x)}{1+\beta_{x,M}\mathbbm{1}^{\top}D_n^{-1}(\phi_x)\mathbbm{1}}.
\end{align*}
Furthermore, given that $D_n^{-1}(\phi_x) $ is a diagonal matrix, the $(\ell,m)-$element of the inverse $J_x^{-1}(\phi_x)$ simplifies into
\begin{align*}
[J_x^{-1}(\phi_x)]_{\ell,m}=
\begin{cases}
\frac{1}{d_{x,\ell}} - \frac{\beta_{x,M}}{d_{x,\ell}^2\left(1+\beta_{x,M}\sum_{i=1}^{M-1}\frac{1}{d_{x,i}}\right)}, & \ell=m,\\[12pt]
-\frac{\beta_{x,M}}{d_{x,\ell}d_{x,m} \left(1+\beta_{x,M}\sum_{i=1}^{M-1}\frac{1}{d_{x,i}}\right)},
 & \ell\neq m.
\end{cases}
\end{align*}
By invertibility of $J_x(\theta)$, for $\theta\in\Theta$, one can use the EM-Gradient algorithm for faster convergent estimation of $\theta^0$.

\subsection{\textbf{The observed information matrix $J_y(\theta)$}}
The elements of the observed information matrix $J_y(\theta)$ is given in Proposition 5 of Frydman and Surya (2022):
\begin{eqnarray*}
J_y(\phi_{x,m},\phi_{y,n})=\delta _{x}(y)\sum_{k=1}^{K} \frac{\widehat{\Psi }_{x,m\vert M}^k(\theta)}{\phi_{x,m}}\frac{\widehat{\Psi}_{y,\ell\vert M}^k(\theta)}{\phi_{y,\ell}} B_{y}^{k},
\end{eqnarray*}
where $\widehat{\Psi}_{x,m\vert M}^k(\theta)= \widehat{\Phi}_{k,m}(\theta) -\frac{\phi_{x,m}}{\phi_{x,M}}\widehat{\Phi}_{k,M}(\theta)$. Moreover,
\begin{align*}
&J_y(q_{xy,m},q_{rv,\ell})=\frac{\delta_x(r)\delta_y(v)\delta_m(\ell)}{q_{xy,m}q_{rv,\ell}}\widehat{N}_{rv,\ell}(\theta)\\
&\hspace{0.25cm} -\sum_{k=1}^{K}\widehat{\Phi }_{k,\ell}(\theta)\left(\delta _{m}(\ell)-\widehat{\Phi }_{k,m}(\theta)\right) \frac{\widehat{A}_{xy,m}^{k}(\theta)}{q_{xy,m}} \frac{\widehat{A}_{rv,\ell}^{k}(\theta)}{q_{rv,\ell}},
\end{align*}
with $\widehat{\Psi}_{x,m\vert M}^k(\theta)= \widehat{\Phi}_{k,m}(\theta) -\frac{\phi_{x,m}}{\phi_{x,M}}\widehat{\Phi}_{k,M}(\theta)$, and
\begin{align*}
&J_y(\phi_{x,m},q_{rv,\ell})\\
&\hspace{0.5cm}=-\sum_{k=1}^{K}\widehat{\Phi }_{k,\ell}(\theta)\Big(\delta _{m}(\ell)- 
\widehat{\Psi }_{x,m\vert M}^k(\theta)\Big)\frac{\widehat{A}_{rv,\ell}^{k}(\theta)}{q_{rv,\ell}} \frac{B_{x}^{k}}{ \phi_{x,m}}\\
&\hspace{1.5cm}+ \delta_M(\ell) \sum_{k=1}^{K}\widehat{\Phi }_{k,\ell}(\theta) \frac{\widehat{A}_{rv,\ell}^{k}(\theta)}{q_{rv,\ell}}  \frac{B_{x}^{k}}{\phi_{x,M}}.
\end{align*}

Observe that unlike $J_x(\theta)$, the observed information $J_y(\theta)$ is in general not a sparse matrix, hence is more difficult to invert than $J_x(\theta)$. In such case, we will use the iterative scheme (\ref{eq:itervar}) to get the inverse matrix $J_y^{-1}(\theta)$.

\subsection{\textbf{The M-estimator $\widehat{\theta}_n^0$ and the EM algorithm}}

On recalling that $(X^k,\Phi^k)$ constitutes a complete dataset for the $k-$th observation, the M-criterion (\ref{eq:EM}) reads
\begin{align*}
\mathscr{M}_n(\theta)=\frac{1}{n}\sum_{k=1}^n \mathbb{E}\left[\log f_c(X^k,\Phi^k\vert\theta)\big\vert X^k, \theta^0\right].
\end{align*} 
As $\widehat{\theta}_n^0$ maximizes the M-criterion $\mathscr{M}_n(\theta)$, we obtain
\begin{equation*}
\begin{split}
&\frac{1}{n}\sum_{k=1}^n \mathbb{E}\Big[\frac{\partial \log f_c(X^k,\Phi^k\vert \theta)}{\partial \phi_{x,m}}\Big\vert X^k,\theta^0\Big]\\
&\hspace{1cm}=\frac{\widehat{B}_{x,m}(\theta^0)}{n\phi_{x,m}} - \frac{\widehat{B}_{x,M}(\theta_0)}{n\phi_{x,M}}, \;\; 1\leq m\leq M-1,
\end{split}
\end{equation*}
and
\begin{equation*}
\begin{split}
&\frac{1}{n}\sum_{k=1}^n \mathbb{E}\Big[\frac{\partial \log f_c(X^k,\Phi^k\vert \theta)}{\partial q_{xy,m}}\Big\vert X^k,\theta^0\Big]\\
&\hspace{1cm}=\frac{\widehat{N}_{xy,m}(\theta^0)}{nq_{xy,m}}- \frac{\widehat{T}_{x,m}(\theta^0)}{n},\;\; 1\leq m\leq M,
\end{split}
\end{equation*}
for $x,y\in\mathbb{S}$. On account that $\widehat{\phi}_{x,M}=1-\sum_{m=1}^{M-1} \widehat{\phi}_{x,m}$ and $\widehat{B}_{x,M}(\theta^0)=\overline{B}_x - \sum_{m=1}^{M-1} \widehat{B}_{x,m}(\theta^0)$, the components $\widehat{\phi}_{x,m}^0$ and $\widehat{q}_{xy,m}^0$ of the M-estimator $\widehat{\theta}_n^0$ are given by 
\begin{align}\label{eq:mlecmjp}
\widehat{\phi}_{x,m}^0=\frac{\widehat{B}_{x,m}(\theta^0)}{\overline{B}_x} \quad \textrm{and} \quad 
\widehat{q}_{xy,m}^0=\frac{\widehat{N}_{xy,m}(\theta^0)}{\widehat{T}_{x,m}(\theta^0)}.
\end{align}
Since the true value $\theta^0$ is unknown, the components $\phi_{x,m}^0$ and $q_{xy,m}^0$ are estimated iteratively by replacing $\theta^0$ in (\ref{eq:mlecmjp}) by $\widehat{\theta}^{(\ell)}$ obtained after $\ell-$iteration and use $\widehat{\theta}^{(\ell+1)}$ in place of $\widehat{\theta}_n^0$. By doing so, (\ref{eq:mlecmjp}) yields the EM-iteration scheme, 
\begin{eqnarray}\label{eq:EMEst}
\hspace{1cm} \widehat{\phi}_{x,m}^{(\ell+1)}=\frac{\widehat{B}_{x,m}(\widehat{\theta}^{(\ell)})}{\overline{B}_x} \quad \textrm{and} \quad 
\widehat{q}_{xy,m}^{(\ell+1)}=\frac{\widehat{N}_{xy,m}(\widehat{\theta}^{(\ell)})}{\widehat{T}_{x,m}(\widehat{\theta}^{(\ell)})},
\end{eqnarray}
with $\widehat{\theta}^{(0)}$ is chosen based on the sample data. See Frydman and Surya (2022) for details. At the convergence, the recursive estimator (\ref{eq:EMEst}) coincides with the MLE $\widehat{\theta}_n$. 

\section{Simulation study}\label{sec:sec5}
This section verifies the main results of Section \ref{sec:main} using the example of conditional Markov jump process discussed in Section \ref{sec:sec4}. For this purpose, the model parameters $\theta^0=(\phi_{x,m}^0,q_{rv,\ell}^0: (x,r,v)\in\mathbb{S}, \ell,m\in\{1,\ldots, M\})$ are set to have the following values used in Surya (2022a).

\subsection{Parameter value of $\theta^0$}\label{subsec:initial}

\begin{figure}[tp!]
\centering
\includegraphics[width=.95\linewidth]{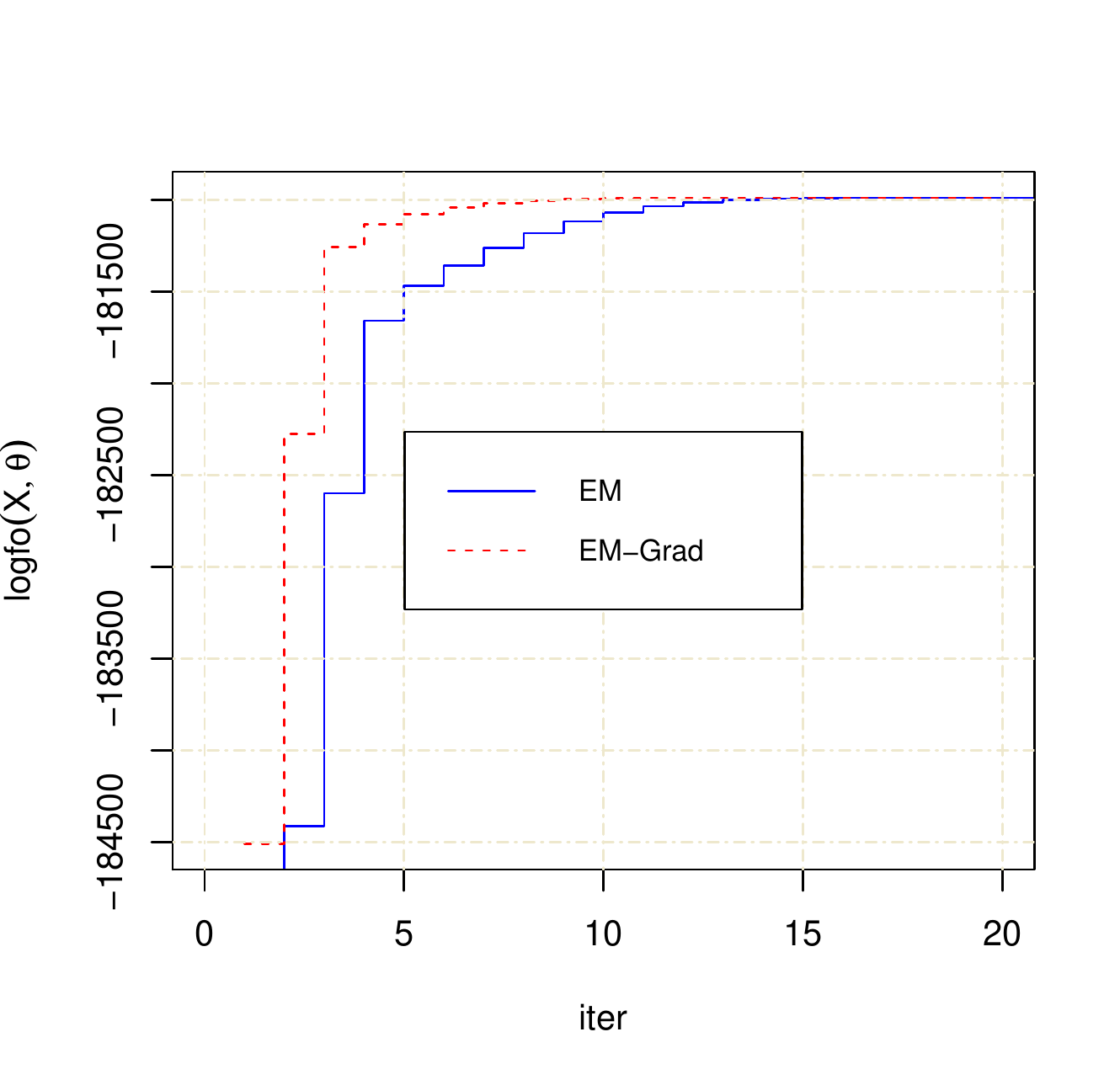}
\caption{Plot of log-likelihood $\log f_o(X_k\vert \widehat{\theta}_{nk}^{\ell})$ of the MLE obtained by EM algorithm compared to that of the EM-Gradient algorithm. } \label{fig:LogLc}
\end{figure}

\begin{figure}[tp!]
\centering
\includegraphics[width=.95\linewidth]{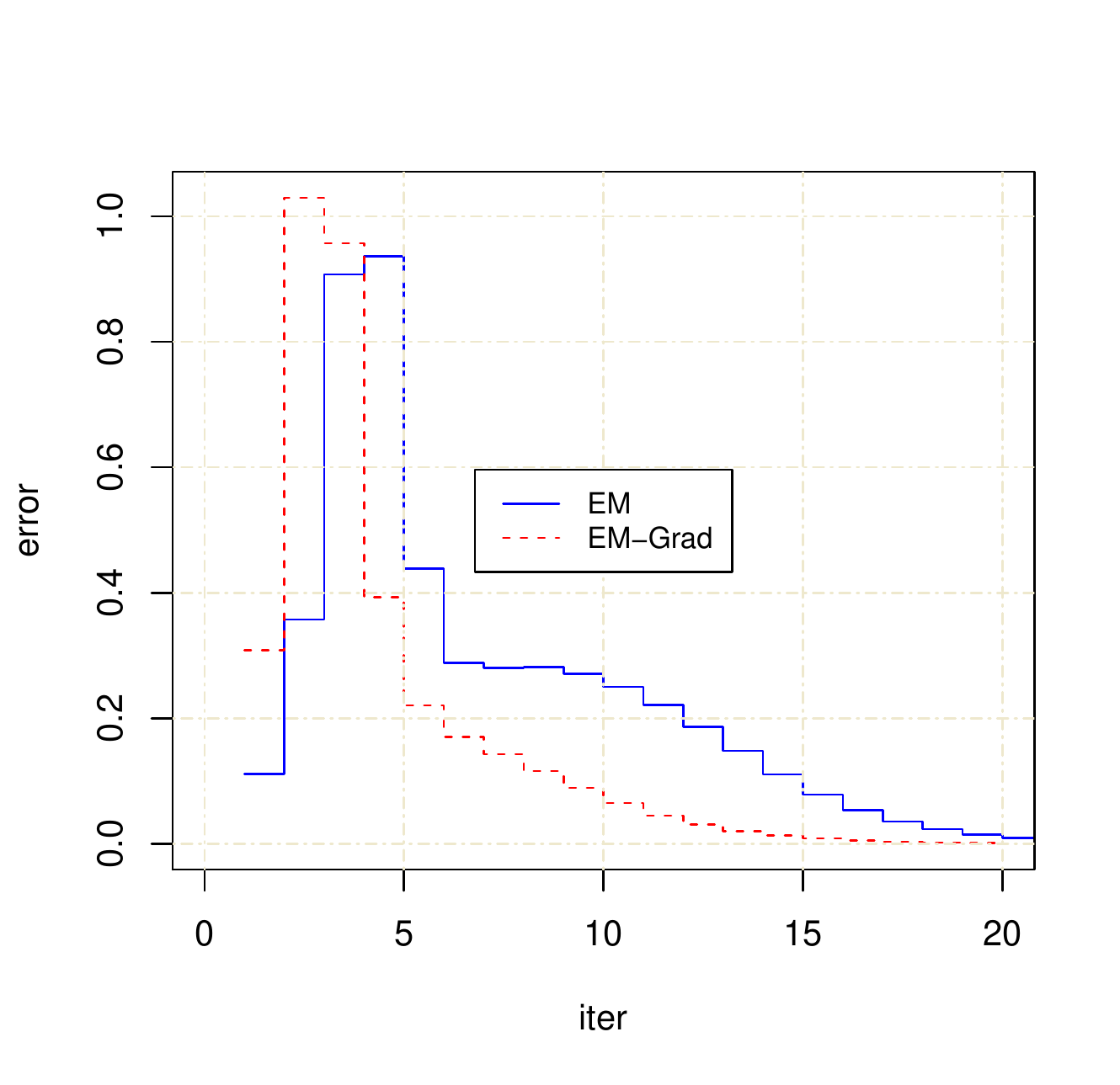}
\caption{Estimation error $\Vert\widehat{\theta}_{nk}^{\ell+1}-\widehat{\theta}_{nk}^{\ell}\Vert$ of the MLE $\widehat{\theta}_{nk}$ using the EM algorithm compared to that of the EM-Gradient algorithm. } \label{fig:Error}
\end{figure}

\begin{table}[h!]
\centering 
\begin{tabular}{ccccc}
\hline
State (x) & $\alpha_x^0$ & $\phi_{x,1}^0$ & $\phi_{x,2}^0$ & $\phi_{x,3}^0$ \\%
[0.5ex] \hline
1 & 1/3  & 0.5 & 0.3  & 0.2\\ 
2 & 1/3  & 0.25 & 0.55 & 0.2\\ 
3 & 1/3 & 0.6  & 0.1 & 0.3\\[1ex] \hline
\end{tabular}
\caption{Parameter values for $\alpha_{x}^0$ and $\phi_{x,m}^0$, $m=1,2,3$. }
\label{table:parvalue}
\end{table}

\begin{table*}[t!]
{\centering 
\par
\begin{tabular}{|l|l|l|l|l|l|l|l|l|l|l|}
\hline
\multirow{2}{*}{$\;\;\theta$} & \multicolumn{1}{c|}{True} & \multicolumn{1}{c|}{Est.} & \multicolumn{4}{c|}{Standard Errors ($\%$)}  & \multicolumn{1}{c|}{KS} \\ 
& \,\,\textrm{Value} & $\;\;\;\widehat{\theta}_n$ & $\textrm{RMSE}$ &\; $\sqrt{\frac{\overline{J}_y^{-1}}{n}}$ &\; $\sqrt{\frac{\Psi_{\ell}}{n}}$ &\;$\sqrt{\frac{\Sigma_n}{n}}$ &  \,\,\, \textrm{test} \\ 
\hline
$\phi_{1,1}$ & 0.5000 & 0.5001 & 1.9653 & 1.9331 & 1.9331 & 1.0767 & 0.8670    \\ 
$\phi_{1,2}$ & 0.3000 & 0.3011 & 1.7579 & 1.7687 & 1.7687 & 1.0084 & 0.4458  \\ 
$\phi_{2,1}$ & 0.2500 & 0.2497 & 1.7939 & 1.8184 & 1.8184 & 0.8739 & 0.0557  \\ 
$\phi_{2,2}$ & 0.5500 & 0.5502 & 1.8725 & 1.8627 & 1.8627 & 1.1420 & 0.8420  \\ 
$\phi_{3,1}$ & 0.6000 & 0.6008 & 1.7988 & 1.7581 & 1.7581 & 1.1113 & 0.7539  \\ 
$\phi_{3,2}$ & 0.1000 & 0.0995 & 1.2848 & 1.2371 & 1.2371 & 0.5909 & 0.7419  \\ 
$q_{12,1}$ & 1.2000 & 1.2006 & 1.3919 & 1.6015 & 1.6015 & 1.1406 & 0.6706 \\ 
$q_{13,1}$ & 0.8000 & 0.7998 & 1.1935 & 1.2252 & 1.2252 & 0.9338 & 0.9835  \\ 
$q_{21,1}$ & 0.2000 & 0.1999 & 0.2497 & 0.2511 & 0.2511 & 0.1964 & 0.8575  \\ 
$q_{23,1}$ & 0.2000 & 0.1996 & 0.2433 & 0.2506 & 0.2506 & 0.1965 & 0.2385  \\ 
$q_{31,1}$ & 1.2000 & 1.1984 & 1.9141 & 2.0510 & 2.0510 & 1.3234 & 0.3910  \\ 
 $q_{32,1}$ & 1.8000 & 1.7968 & 2.0537 & 2.1452 & 2.1452 & 1.7263 & 0.0755  \\ 
$q_{12,2}$ & 2.4000 & 2.3989 & 4.2487 & 4.0711 & 4.0711 & 2.4568 & 0.6955  \\ 
$q_{13,2}$ & 0.6000 & 0.5990 & 1.8485 & 1.7905 & 1.7905 & 1.2667 & 0.7836 \\ 
$q_{21,2}$ & 0.2000 & 0.1998 & 0.3052 & 0.2882 & 0.2882 & 0.2268 & 0.1171  \\ 
 $q_{23,2}$ & 0.2000 & 0.2000 & 0.2838 & 0.2884 & 0.2884 & 0.2268 & 0.7685  \\ 
$q_{31,2}$ & 0.4000 & 0.3999 & 1.3428 & 1.3536 & 1.3536 & 0.8441 & 0.6424 \\ 
$q_{32,2}$ & 1.6000 & 1.5994 & 2.2547 & 2.2802 & 2.2802 & 1.8279 & 0.5511  \\ 
$q_{12,3}$ & 1.6000 & 1.5987 & 2.5430 & 2.4654 & 2.4654 & 2.1908 & 0.3697 \\ 
$q_{13,3}$ & 2.4000 & 2.4019 & 3.3403 & 3.2976 & 3.2976 & 2.5562 & 0.9697  \\ 
$q_{21,3}$ & 0.2000 & 0.2001 & 0.3471 & 0.3200 & 0.3200 & 0.2788 & 0.8565  \\ 
$q_{23,3}$ & 0.2000 & 0.2000 & 0.2993 & 0.3198 & 0.3198 & 0.2787 & 0.6572  \\ 
$q_{31,3}$ & 3.0000 & 3.0025 & 3.6650 & 4.0338 & 4.0338 & 3.2132 & 0.1953  \\ 
$q_{32,3}$ & 2.0000 & 1.9986 & 3.1396 & 3.0636 & 3.0636 & 2.7344 & 0.8978 \\ 
     \hline
\end{tabular}
\caption{Comparison between true value $\theta^0$, the MLE $\widehat{\theta}_n$ and its standard errors using inverse of observed Fisher information $\frac{\overline{J}_y^{-1}(\widehat{\theta}_n)}{n}$, iterative estimator $\frac{\Psi_{\ell}}{n}$ \eqref{eq:itervar}, and the sandwich estimator $\frac{\Sigma_n(\widehat{\theta}_n)}{n}$ for $K=200$ independent repeated sampling of sample paths of size $n=4000$. The last column provides the p-values of the Kolmogorov-Smirnov statistics.}
\label{table:tabres1}
}
\end{table*}
\begin{table*}[t!]
{\centering 
\par
\begin{tabular}{|l|l|l|l|l|l|l|l|l|l|l|}
\hline
\multirow{2}{*}{$\;\;\theta$} & \multicolumn{1}{c|}{True} & \multicolumn{1}{c|}{Est.} & \multicolumn{4}{c|}{Standard Errors ($\%$)}  & \multicolumn{1}{c|}{KS} \\ 
& \,\,\textrm{Value} & $\;\;\;\widehat{\theta}_n^0$ & $\textrm{RMSE}$ &\;$\sqrt{\frac{\Sigma_n}{n}}$ &\; $\sqrt{\frac{\overline{J}_y^{-1}}{n}}$ &\; $\sqrt{\frac{\Psi_{\ell}}{n}}$ &  \,\,\, \textrm{test} \\ 
\hline
$\phi_{1,1}$ & 0.5000 & 0.5004 & 1.1175 & 1.0779 & 1.9388 & 1.9387 & 0.6145 \\ 
$\phi_{1,2}$ & 0.3000 & 0.3003 & 1.0836 & 1.0098 & 1.7741 & 1.7741 & 0.7321 \\ 
$\phi_{2,1}$ & 0.2500 & 0.2501 & 0.8180 & 0.8773 & 1.8214 & 1.8214 & 0.4728 \\ 
$\phi_{2,2}$ & 0.5500 & 0.5488 & 1.1116 & 1.1454 & 1.8659 & 1.8658 & 0.3376 \\ 
$\phi_{3,1}$ & 0.6000 & 0.6006 & 1.1202 & 1.1110 & 1.7662 & 1.7662 & 0.4826 \\ 
$\phi_{3,2}$ & 0.1000 & 0.1007 & 0.6369 & 0.5987 & 1.2513 & 1.2513 & 0.1117 \\ 
$q_{12,1}$ & 1.2000 & 1.2001 & 1.0790   & 1.1395 & 1.6035 & 1.6035   & 0.6775 \\ 
$q_{13,1}$ & 0.8000 & 0.8003 & 0.8610  & 0.9335  & 1.2261 & 1.2261   & 0.6608 \\ 
$q_{21,1}$ & 0.2000 & 0.1998 & 0.1957  & 0.1962 & 0.2513 & 0.2513  & 0.3156 \\ 
$q_{23,1}$ & 0.2000 & 0.1999 & 0.1989 & 0.1963 & 0.2508 & 0.2508  & 0.6091 \\ 
$q_{31,1}$ & 1.2000 & 1.1991 & 1.3643 & 1.3217 & 2.0585 & 2.0585  & 0.8254 \\ 
$q_{32,1}$ & 1.8000 & 1.7982 & 1.6917 & 1.7252 & 2.1467 & 2.1467  & 0.2406 \\ 
$q_{12,2}$ & 2.4000 & 2.3979 & 2.4644 & 2.4559 & 4.0896 & 4.0896  & 0.7365 \\ 
$q_{13,2}$ & 0.6000 & 0.5989 & 1.2403 & 1.2670 & 1.7960 & 1.7960 & 0.5133 \\ 
$q_{21,2}$ & 0.2000 & 0.2000 & 0.2261 & 0.2269 & 0.2889 & 0.2889  & 0.5564 \\ 
$q_{23,2}$ & 0.2000 & 0.2002 & 0.2249 & 0.2269 & 0.2892 & 0.2892 & 0.0983 \\ 
$q_{31,2}$ & 0.4000 & 0.3999 & 0.8562 & 0.8450 & 1.3570 & 1.3570 & 0.6364 \\ 
$q_{32,2}$ & 1.6000 & 1.6003 & 1.7598 & 1.8286 & 2.2857 & 2.2857 & 0.8173 \\ 
$q_{12,3}$ & 1.6000 & 1.6005 & 2.1762 & 2.1921 & 2.4698 & 2.4698 & 0.8779 \\ 
$q_{13,3}$ & 2.4000 & 2.4017 & 2.6142 & 2.5553 & 3.3039 & 3.3039  & 0.4707 \\ 
$q_{21,3}$ & 0.2000 & 0.1998 & 0.3005 & 0.2788 & 0.3205 & 0.3205 & 0.3840 \\ 
$q_{23,3}$ & 0.2000 & 0.1999 & 0.2668 & 0.2785 & 0.3199 & 0.3199 & 0.8889 \\ 
$q_{31,3}$ & 3.0000 & 3.0006 & 3.1549 & 3.2128 & 4.0344 & 4.0344  & 0.7093 \\ 
$q_{32,3}$ & 2.0000 & 2.0008 & 2.7380 & 2.7344 & 3.0657 & 3.0657 & 0.4681 \\ 
             \hline
\end{tabular}
\caption{Comparison between true value $\theta^0$, the M-estimator $\widehat{\theta}_n^0$ and its standard errors using inverse of observed Fisher information $\frac{\overline{J}_y^{-1}(\widehat{\theta}_n^0)}{n}$, iterative estimator $\frac{\Psi_{\ell}}{n}$ \eqref{eq:itervar}, and the sandwich estimator $\frac{\Sigma_n(\widehat{\theta}_n^0)}{n}$ for $K=200$ independent repeated sampling of sample paths of size $n=4000$. The last column provides the p-values of the Kolmogorov-Smirnov statistics.}
\label{table:tabres2}
}
\end{table*}

For simulation study, let $\mathbb{S}=\{1,2,3\}$ and $M=3$. The value of initial probabilities $\alpha_x^0$ and $\phi _{x,m}^0$ are presented in Table \ref{table:parvalue}. The intensity matrices $Q_{1}^0$, $Q_{2}^0$, and $Q_3^0$ for the regime membership $X^{(1)}$, $X^{(2)}$, and $X^{(3)}$ are 
\begin{eqnarray*}
Q_{1}^0=\left( 
\begin{array}{ccc}
-2.0 & 1.2 & 0.8 \\ 
0.2 & -0.4 & 0.2 \\ 
1.2 & 1.8 & -3.0%
\end{array}\right), 
\quad Q_{2}^0=\left( 
\begin{array}{ccc}
-3.0 & 2.4 & 0.6 \\ 
0.2 & -0.4 & 0.2 \\ 
0.4 & 1.6 & -2.0%
\end{array}%
\right),
\end{eqnarray*}%
and
\begin{eqnarray*}
Q_{3}^0=\left( 
\begin{array}{ccc}
-4.0 & 1.6 & 2.4 \\ 
0.2 & -0.4 & 0.2 \\ 
3.0 & 2.0 & -5.0%
\end{array}%
\right),
\end{eqnarray*}
respectively. We see following intensity matrices $Q_m^0$, $m=1,2,3$, that each regime $X^{(m)}$ has different expected state occupation time and the probability of making an immediate jump from one state to another, except for the transition from state 2. In the latter case, it is difficult to identify which underlying Markov jump process that drives the dynamics of $X$ when it moves out of state 2. 

\subsection{Simulation and estimation results}

\subsubsection{\textbf{Maximum likelihood estimation $\widehat{\theta}_n$}}\label{sec:sec521}

Based on the above parameters, a set of $K=200$ independent RSCMJP sample paths $X_k=\{X_k^{\ell}:\ell=1,\ldots,n\}$, $k\in\{1,\ldots,K\}$, of size $n=4000$ each are generated. See Surya (2022a) for the algorithm of generating sample paths of RSCMJP. To each set $X_k$ of sample paths, the MLE $\widehat{\theta}_{nk}$ is found using the EM algorithm (\ref{eq:EMEst}) and the EM-Gradient algorithm (\ref{eq:NRAlgo2}). In carrying out the statistical computation, the \textbf{R} language (2013) was used. 

Figure \ref{fig:LogLc} compares in each iteration the value of incomplete-data log-likelihood $\log f_o(X_k\vert \widehat{\theta}_{nk}^{\ell})$ of a randomly chosen set $X_k$ of sample paths as a function of the current estimate $\widehat{\theta}_{nk}^{\ell}$ obtained using the EM algorithm \eqref{eq:EMEst} against that of derived using the EM-Gradient algorithm \eqref{eq:NRAlgo2}. We observe that the EM-Gradient algorithm reaches its convergence faster than the EM algorithm as it requires less iterations to converge. Figure \ref{fig:Error} shows estimation error $\Vert\widehat{\theta}_{nk}^{\ell+1}-\widehat{\theta}_{nk}^{\ell}\Vert$ of the two algorithms. As we can see from the figure that the EM-Gradient algorithm is able to correct estimation error to reach its convergence faster than the EM algorithm. The explanation for this observation is due to the fact that the EM-algorithm is equivalent to the Fisher scoring iteration \eqref{eq:NRAlgo} whose successive estimation error $\Vert\widehat{\theta}_{nk}^{\ell+1}-\widehat{\theta}_{nk}^{\ell}\Vert$ is determined by the inverse $J_y^{-1}(\widehat{\theta}_{nk}^{\ell})$, whereas for the EM-Gradient by $J_x^{-1}(\widehat{\theta}_{nk}^{\ell})$, which is smaller than $J_y^{-1}(\widehat{\theta}_{nk}^{\ell})$ in the sense of Loewner partial matrix ordering (\ref{eq:ordering2}), attributed by the resulting loss of information presented in incomplete data.

To verify the asymptotic properties of the MLEs $\widehat{\theta}_{nk}$, the mean squared error $\textrm{MSE}=\frac{1}{K}\sum_{k=1}^K (\widehat{\theta}_{nk}-\theta^0)^2$ is calculated. Subsequently, the information matrices $J_{yk}:=J_{yk}(\widehat{\theta}_{nk})$ and $J_{xk}:=J_{xk}(\widehat{\theta}_{nk})$ are evaluated. Then each set of matrices $\{J_{yk}: k=1,\ldots, K\}$ and $\{J_{xk}: k=1,\ldots, K\}$ are averaged to obtain $\overline{J}_y=\frac{1}{K}\sum_{k=1}^K J_{yk}$ and $\overline{J}_x=\frac{1}{K}\sum_{k=1}^K J_{xk}$. The inverse $\overline{J}_y^{-1}$ of the averaged Fisher information $\overline{J}_y$ is used to get estimated standard error of the MLE $\widehat{\theta}_n=\frac{1}{K}\sum_{k=1}^K \widehat{\theta}_{nk}$. Also, the iterative estimator $\Psi_{\ell}$ \eqref{eq:itervar} of the Cram\'er-Rao lower bound $J_y^{-1}(\widehat{\theta}_n)$ is used for $\ell=50$ iteration steps to calculate the standard error of the MLE $\widehat{\theta}_n$. They are both compared to the sandwich estimator $\Sigma_n=\overline{J}_x^{-1}\overline{J}_y \overline{J}_x^{-1}$. Note that in the computation of $\Psi_{\ell}$, the matrices $\overline{J}_x$ and $\overline{J}_y$ are used in place of $J_x(\widehat{\theta}_n)$ and $J_y(\widehat{\theta}_n)$, respectively. Furthermore, since we consider the standard error of $\widehat{\theta}_n$, all covariance matrices $\overline{J}_y^{-1}$, $\Psi_{\ell}$ and $\Sigma_n$ are standardized by the sample size $n$.

Table \ref{table:tabres1} presents the estimation results for the MLE $\widehat{\theta}_n$, the standard errors and the Kolmogorov-Smirnov (KS) goodness-of-fit for the asymptotic distribution of the MLEs $\widehat{\theta}_{nk}$. The table shows the convergence of $\widehat{\theta}_n$ to the true value $\theta^0$ and of the iterative estimator $\Psi_{\ell}$ \eqref{eq:itervar} to $\overline{J}_y^{-1}$ (see Theorem \ref{theo:theo4}), where the iteration is evaluated at the MLE $\widehat{\theta}_n$. Also, the table shows that the root of MSEs are much closer to the standard errors produced by $\overline{J}_y^{-1}$ and are larger than those provided by $\Sigma_n$, by the matrix ordering (\ref{eq:ordering2}). By the adherence of $\widehat{\theta}_n$ to $\theta^0$, the MSEs are close to the sample variance of $\widehat{\theta}_{nk}$. The p-values of the KS statistic for the standardized error $z_{nk}:=(\widehat{\theta}_{nk}-\theta^0)/\widehat{\sigma}_{\widehat{\theta}_{nk}}$, where $\widehat{\sigma}_{\widehat{\theta}_{nk}}$ is the estimated standard deviation of $\widehat{\theta}_{nk}$, given by the corresponding diagonal element of $\overline{J}_y^{-1}$, is larger than $\alpha=5\%$ confirming the asymptotic normal distribution of the MLE $\widehat{\theta}_n$ as stated in Section \ref{sec:sec22}.

\subsubsection{\textbf{Improved estimation by the M-estimator $\widehat{\theta}_n^0$}}\label{sec:sec522}

To improve the estimation and reduce variability in the standard errors, we consider the M-estimator $\widehat{\theta}_{nk}^0$ defined as the maximizer of the M-criterion $\mathscr{M}_n(\theta):=\frac{1}{n}\sum_{\ell=1}^n \mathbb{E}\big[\log f_c(X_k^{\ell},\Phi_k^{\ell}\vert \theta) \big\vert X_k^{\ell}, \widehat{\theta}_n\big]$ as discussed in Section \ref{sec:improvedmle}. The components $\widehat{\phi}_{x,m}^0$ and $\widehat{q}_{xy,m}$ of $\widehat{\theta}_{nk}^0$ are obtained by applying each randomly generated set $X_k=\{X_k^{\ell}:\ell=1,\ldots,n\}$, $k\in\{1,\ldots,K\}$, of sample paths to the estimator (\ref{eq:mlecmjp}) by replacing the true value $\theta^0$ by the MLE $\widehat{\theta}_n$. The estimator $\widehat{\theta}_n^0=\frac{1}{K}\sum_{k=1}^K \widehat{\theta}_{nk}^0$ is reserved as the estimate of $\theta^0$. The mean squared error $\frac{1}{K}\sum_{k=1}^K (\widehat{\theta}_{nk}^0 -\theta^0)^2$ and the observed information matrices $J_{xk}:=J_{xk}(\widehat{\theta}_n)$ and $J_{yk}:=J_{yk}(\widehat{\theta}_n)$ are evaluated based on $\widehat{\theta}_n$. The standard errors from MSE are compared to those provided by the inverse $\overline{J}_y^{-1}$ of the averaged observed information matrix $\overline{J}_y=\frac{1}{K}\sum_{k=1}^K J_{yk}$, the iterative estimator $\Psi_{\ell}$ \eqref{eq:itervar} (for $\ell=50$ iterations) and the sandwich estimator $\Sigma_n=\overline{J}_x^{-1}\overline{J}_y\overline{J}_x^{-1}$ with $\overline{J}_x=\frac{1}{K}\sum_{k=1}^K J_{xk}$. As before, in the computation of $\Psi_{\ell}$, the matrices $\overline{J}_x$ and $\overline{J}_y$ are used in place of $J_x(\widehat{\theta}_n)$ and $J_y(\widehat{\theta}_n)$, respectively. 

The results are presented in Table \ref{table:tabres2}. In contrast to the results presented in Table \ref{table:tabres1}, the M-estimator $\widehat{\theta}_n^0$ shows some improvements on the estimation in terms of overall being closer to the true value $\theta^0$ than the MLE $\widehat{\theta}_n$ to $\theta^0$ with smaller standard errors (RMSEs) than RMSEs of $\widehat{\theta}_n$, given by $\overline{J}_y^{-1}(\widehat{\theta}_n)$. The table shows  the convergence of $\Psi_{\ell}$ to $\overline{J}_y^{-1}(\widehat{\theta}_n)$, confirming the result of Theorem \ref{theo:theo4}. The convergence of $\widehat{\theta}_n^0$ to $\theta^0$ makes the RMSEs close to the sample variance of $\widehat{\theta}_{nk}^0$. We observe that RMSEs are much closer to the standard errors produced by the sandwich estimator $\Sigma_n$ than those given by $\overline{J}_y^{-1}(\widehat{\theta}_n)$. The last column corresponds to the p-values of the KS statistics. All p-values are larger than $\alpha=5\%$ confirming the asymptotic normal distribution of the M-estimator $\widehat{\theta}_n^0$ with finite-sample covariance matrix $\Sigma_n$ as stated in Theorem \ref{theo:maintheo}, in particular (\ref{eq:sqrtndist}) for the repeated sampling method.

%
%
%
 
\section{Concluding Remarks}\label{sec:sec6}
This paper developed some new results on maximum likelihood estimation from incomplete data. The novelty of the approach is based on the Halmos and Savage (1949) work on the theory of sufficient statistics. The results have largely remained unexamined as far as the literatures on incomplete data are concerned, see e.g., Schervish (1995), McLachlan and Krishnan (2008), and Little and Rubin (2020). Conditional observed information matrices are introduced and utilized to their greater extent to which their finite-sample properties are established. They possess the same Loewner partial matrix ordering properties as the expected Fisher information matrices do. In particular, we show the resulting loss of information holds for finite sample of incomplete data. In its new form, the observed Fisher information of incomplete data simplifies Louis (1982) formula for the same matrix in terms of simplifying the conditional expectation of the outer product of the complete-data score function. Also, it directly verifies asymptotic consistency of the matrix. Our method of derivation of the matrix is direct and much simplified compared to those given by Louis (1982) and McLachlan and Krishnan (2008). To avoid getting an incorrect inverse of the observed information matrix, which may be attributed by the lack of sparsity and large size of the matrix, a monotone convergence recursive equation for estimator of the inverse matrix is developed. An improved estimation by the M-estimator is proposed using repeated sampling method. A consistent sandwich estimator of covariance matrix of the M-estimator is proposed using the conditional information matrices. It extends the Huber sandwich estimator (Huber, 1967; Freedman, 2006; Little and Rubin, 2020) to model misspecification under incomplete data. The main appealing feature of the proposed sandwich estimator is that unlike its counterpart, it does not involve the inverse of observed Fisher information matrix. The standard errors of the M-estimator produced by the sandwich estimator are smaller than those of the MLE given by the inverse of the observed information matrix. The difference is attributed to the resulting information loss presented in the incomplete data. The simulation study on a complex stochastic model of regime-switching conditional Markov jump processes confirms the results presented in this paper. We believe that the results should offer potential for variety of applications for maximum likelihood estimation from incomplete data.

\section*{Acknowledgments}
Part of this work was carried out while Budhi Surya was visiting the Department of Technology, Operations and Statistics of New York University Stern School of Business in September 2019. He thanks Professor Halina Frydman for the invitation and for the hospitality provided during his stay at the NYU Stern. 

\end{document}